\crefname{lstinputlisting}{program}{programs}
\Crefname{lstinputlisting}{Program}{Programs}
\newcommand{\codify}[1]{\textup{\texttt{#1}}}
\newcommand{\while}{\codify{while}\xspace}
\newcommand{\ASSIGN}[2]{{#1} \coloneqq {#2}}
\newcommand{\WHILEDO}[2]{\codify{while}\:(#1)\:\{#2\}}
\newcommand{\pgcl}{\codify{pGCL}\xspace}
\newcommand{\pskip}{\codify{skip}\xspace}
\newcommand{\pif}{\codify{if}}
\newcommand{\ITE}[3]{\codify{if}\:(#1)\:\{#2\} \: \codify{else} \: \{#3\}}
\newcommand{\PCHOICE}[3]{\{#1\} \: [#2] \: \{#3\}}
\newcommand{\compose}[2]{{#1}{\fatsemi} \: {#2}}
\newcommand{\pelse}{\codify{else}}
\newcommand{\sem}[1]{\ensuremath{\left\llbracket #1 \right\rrbracket}}
\newcommand{\semapp}[2]{\ensuremath{\left\llbracket #1 \right\rrbracket \left( #2 \right) }}
\newcommand{\constrain}[2]{\restrict{#1}{#2}}
\newcommand{\abs}[1]{\ensuremath{\vert{#1}\vert}}
\newcommand{\evalstate}[2]{\eval{#2}{#1}}
\newcommand{\restrict}[2]{\ensuremath{\left\langle #1 \right\rangle_{#2}}}
\definecolor{webgreen}{rgb}{0,.5,0}
\newcommand{\green}[1]{\textcolor{webgreen}{#1}}
\newcommand{\annotate}[1]{\green{\!\!{\fatslash}\!\!{\fatslash}~~\vphantom{G'} {#1}}}
\newcommand{\phifct}[2]{\Phi_{#1, #2}}
\newcommand{\phifctapp}[3]{\phifct{#1}{#2} \left( #3\right)}
\newcommand{\phifctpowerapp}[4]{\phifct{#1}{#2}^{#3} \left( #4\right)}
\newcommand{\phifctres}[3]{\Phi_{#1, #2, #3}}
\newcommand{\extract}[2]{\ensuremath{\left[{#1}\right]_{#2}}}
\newcommand{\mass}[1]{\ensuremath{\left|{#1}\right|}}
\newcommand{\bvec}[1]{\ensuremath{\mathbf{#1}}}
\newcommand{\ppto}{~{}\dashrightarrow{}~}
\newcommand{\tto}{~{}\to{}~}
\newcommand{\ccoloneqq}{~{}\coloneqq{}~}
\newcommand{\eval}[2]{\textsf{eval}_{#2}(#1)}
\newcommand{\pgf}{\textnormal{\textsf{PGF}}\xspace}
\newcommand{\ps}{\textnormal{\textsf{FPS}}\xspace}
\newcommand{\mon}[1]{\ensuremath{\textnormal{Mon}\left(#1\right)\xspace}}
\newcommand{\var}{\textnormal{Var}\xspace}
\newcommand{\N}{\ensuremath{\mathbb{N}}\xspace}
\newcommand{\PosRealsInf}{\ensuremath{\mathbb{R}_{\geq 0}^\infty}\xspace}
\newcommand{\PosReals}{\ensuremath{\mathbb{R}_{\geq 0}}\xspace}
\newcommand{\lfp}{\ensuremath{\textnormal{\textsf{lfp}}~}\xspace}
\newcommand{\distr}{\mathcal{D}}
\newcommand{\subdistr}{\mathcal{D}_{\leq}}
\newcommand{\ppreceq}{~{}\preceq{}~}
\newcommand{\eeq}{~{}={}~}
\newcommand{\lleq}{~{}\leq{}~}
\newcommand{\ssqsubseteq}{~{}\sqsubseteq{}~}
\newcommand{\iimplies}{~{}\implies{}~}
\newcommand{\pplus}{~{}+{}~}
\newcommand{\mminus}{~{}-{}~}
\newcommand{\mmapsto}{~{}\mapsto{}~}
\newcommand{\bigmid}{~\big|~}
\newcommand{\qqiff}{\qquad\textnormal{iff}\qquad}
\newcommand{\qqand}{\qquad\textnormal{and}\qquad}
\newcommand{\qimplies}{\quad\textnormal{implies}\quad}
\newcommand{\qqimplies}{\qquad\textnormal{implies}\qquad}
\newcommand{\program}[3]{\begin{figure*}[#3] \begin{center}\lstinputlisting[mathescape=true, caption=#2, captionpos=b, label={prog:#1}, frame=l, framerule=1.5pt, framesep=2pt, backgroundcolor = \color{lightgray!50!white}]{Programs/#1}\end{center}\end{figure*}}
\newcommand{\Pref}[1]{Program \ref{#1}}
\begin{document}
\title{Generating Functions for Probabilistic Programs\thanks{This research was funded by the ERC AdG project FRAPPANT (787914) and the DFG RTG 2236 UnRAVeL.}
}

%
%
\author{Lutz Klinkenberg\inst{1}\orcidID{0000-0002-3812-0572} \and
Kevin Batz\inst{1}\orcidID{0000-0001-8705-2564} \and
Benjamin Lucien Kaminski \inst{1,2}\orcidID{0000-0001-5185-2324} \and
Joost-Pieter Katoen \inst{1} \orcidID{0000-0002-6143-1926} \and
Joshua Moerman \inst{1} \orcidID{0000-0001-9819-8374} \and
Tobias Winkler \inst{1} \orcidID{0000-0003-1084-6408}}
\authorrunning{L. Klinkenberg et al.}
%
\institute{RWTH Aachen University, 52062 Aachen, Germany \and
University College London, United Kingdom
\\
\email{\{lutz.klinkenberg, kevin.batz, benjamin.kaminski, katoen, joshua, tobias.winkler\}@cs.rwth-aachen.de}}
\maketitle              
\begin{abstract}
This paper investigates the usage of generating functions (GFs) encoding measures over the program variables for reasoning about discrete probabilistic programs.
To that end, we define a denotational GF-transformer semantics for probabilistic while-programs, and show that it instantiates Kozen's seminal distribution transformer semantics.
We then study the effective usage of GFs for program analysis.
We show that finitely expressible GFs enable checking super-invariants by means of computer algebra tools, and that they can be used to determine termination probabilities.
The paper concludes by characterizing a class of --- possibly infinite-state --- programs whose semantics is a rational GF encoding a discrete phase-type distribution.

\keywords{probabilistic programs \and quantitative verification \and semantics \and formal power series.}
\end{abstract}

\section{Introduction}\label{sec:intro}

Probabilistic programs are sequential programs for which coin flipping is a first-class citizen.
They are used e.g.\ to represent randomized algorithms, probabilistic graphical models such as Bayesian networks, cognitive models, or security protocols.
Although probabilistic programs are typically rather small, their analysis is intricate. 
For instance, approximating expected values of program variables at program termination is as hard as the universal halting problem~\cite{DBLP:journals/acta/KaminskiKM19}.
Determining higher moments such as variances is even harder.
Deductive program verification techniques based on a quantitative version of weakest preconditions~\cite{DBLP:series/mcs/McIverM05} enable to reason about the outcomes of probabilistic programs, such as what is the probability that a program variable equals a certain value. 
Dedicated analysis techniques have been developed to e.g., determine tail bounds~\cite{DBLP:conf/tacas/BouissouGPCS16}, decide almost-sure termination~\cite{DBLP:journals/pacmpl/McIverMKK18,DBLP:journals/toplas/ChatterjeeFNH18}, or to compare programs~\cite{DBLP:conf/popl/BartheGHS17}.

This paper aims at exploiting the well-tried potential of \emph{probability generating functions}~(PGFs~\cite{JKK1993}) for the analysis of probabilistic programs. 
In our setting, PGFs are power series representations --- generating functions --- encoding \emph{discrete} probability mass functions of joint distributions over program variables.
PGF representations \mbox{--- in} particular if finite --- enable a simple extraction of important information from the encoded distributions such as expected values, higher moments, termination probabilities or stochastic independence of program variables.

To enable the usage of PGFs for program analysis, we define a denotational semantics of a simple probabilistic while-language akin to probabilistic GCL~\cite{DBLP:series/mcs/McIverM05}.
Our semantics is defined in a \emph{forward} manner: given an input distribution over program variables as a PGF, it yields a PGF representing the resulting subdistribution.
The \enquote{missing} probability mass represents the probability of non-termination.
More accurately, our denotational semantics transforms \emph{formal power series} (FPS).
Those form a richer class than PGFs, which allows for overapproximations of probability distributions.
While-loops are given semantics as least fixed points of FPS transformers.
It is shown that our semantics is in fact an instantiation of Kozen's seminal distribution-transformer semantics~\cite{Kozen79}.

The semantics provides a sound basis for program analysis using PGFs.
Using Park's Lemma, we obtain a simple technique to prove whether a given FPS overapproximates a program's semantics i.e., whether an FPS is a so-called super-invariant.
Such upper bounds can be quite useful: for almost-surely terminating programs, such bounds can provide \emph{exact} program semantics, whereas, if the mass of an overapproximation is strictly less than one, the program is \emph{provably non-almost-surely terminating}.
This result is illustrated on a non-trivial random walk and on examples illustrating that checking whether an FPS is a super-invariant can be \emph{automated} using computer algebra tools.

In addition, we characterize a class of --- possibly infinite-state --- programs whose PGF semantics is a rational function.
These \emph{homogeneous bounded programs} (HB programs) are characterized by loops in which each unbounded variable has no effect on the loop guard and is in each loop iteration incremented by a quantity independent of its own value.
Operationally speaking, HB programs can be considered as finite-state Markov chains with rewards in which rewards can grow unboundedly large. 
It is shown that the rational PGF of any program that is equivalent to an almost-surely terminating HB program represents a multi-variate discrete phase-type distribution~\cite{Neuts81}. 
We illustrate this result by obtaining a closed-form characterization for the well-studied infinite-state dueling cowboys example~\cite{DBLP:series/mcs/McIverM05}.

\paragraph{Related work.}
Semantics of probabilistic programs is a well-studied topic.
This includes the seminal works by Kozen~\cite{Kozen79} and McIver and Morgan~\cite{DBLP:series/mcs/McIverM05}.
Other related semantics of discrete probabilistic while-programs are e.g., given in several other articles like~\cite{DBLP:series/mcs/McIverM05,DBLP:conf/aplas/PierroW13,DBLP:journals/pe/GretzKM14,DBLP:conf/aaai/NoriHRS14,DBLP:conf/esop/BichselGV18}.
PGFs have recent scant attention in the analysis of probabilistic programs. 
A notable exception is~\cite{DBLP:conf/icalp/Boreale15} in which generating functions of finite Markov chains are obtained by Pad\'{e} approximation. 
Computer algebra systems have been used to transform probabilistic programs~\cite{DBLP:conf/padl/CaretteS16}, and more recently in the automated generation of moment-based loop invariants~\cite{DBLP:conf/atva/BartocciKS19}.

\paragraph{Organization of this paper.}
After recapping FPSs and PGFs in Sections \ref{sec:prelim}--\ref{sec:gf}, we define our FPS transformer semantics in \Cref{sec:semantics}, discuss some elementary properties and show it instantiates Kozen's distribution transformer semantics~\cite{Kozen79}. 
\Cref{sec:overapprox} presents our approach for verifying upper bounds to loop invariants and illustrates this by various non-trivial examples.
In addition, it characterizes programs that are representable as finite-state Markov chains equipped with rewards and presents the relation to discrete phase-type distributions.
\Cref{sec:conclusion} concludes the paper.
All proofs can be found in the appendix.

\section{Formal Power Series}\label{sec:prelim}

Our goal is to make the potential of probability generating functions available to the formal verification of probabilistic programs.
The programs we consider will, without loss of generality, operate on a fixed set of $k$ program variables.
The valuations of those variables range over $\N$.
A \emph{program state} $\sigma$ is hence a vector in $\N^k$.
We denote the state $(0,\ldots,0)$ by $\vec{0}$.

A prerequisite for understanding probability generating functions are (multivariate) \emph{formal power series} --- \emph{a special way of representing a potentially infinite $k$-dimensional array}.
For $k{=}1$, this amounts to representing a \emph{sequence}.
\begin{definition}[Formal Power Series]
	Let $\bvec{X} = X_1,\, \ldots,\, X_k$ be a fixed sequence of $k$ distinct formal indeterminates.
	For a state $\sigma = (\sigma_1,\, \ldots,\, \sigma_k) \in \N^k$, let $\bvec{X}^\sigma$ abbreviate the formal multiplication $X_1^{\sigma_1} \mathbin{\cdots} X_k^{\sigma_k}$.
	The latter object is called a \emph{monomial} and we denote the set of all monomials over $\bvec{X}$ by $\mon{\bvec{X}}$.
	A (multivariate) \emph{formal power series (FPS)} is a formal sum
	\[
		F \eeq \sum_{\sigma  \in \N^k} \extract{\sigma}{F} \cdot \bvec{X}^\sigma~, \qquad \textnormal{where} \qquad \extract{\: \cdot \:}{F}\colon \quad \N^k \to \PosRealsInf~,
	\]
	where $\PosRealsInf$ denotes the extended positive real line.
	We denote the set of all FPSs by $\ps$.
	Let $F, G \in \ps$.
	If $\extract{\sigma}{F} < \infty$ for all $\sigma \in \N^k$, we denote this fact by $F \ll \infty$.
	The \emph{addition}~$F + G$ and scaling~\mbox{$r \cdot F$} by a scalar $r \in \PosRealsInf$ is \mbox{defined coefficient-wise by}%
	\[
		F + G \eeq \sum_{\sigma  \in \N^k} \bigl(\extract{\sigma}{F} + \extract{\sigma}{G} \bigr ) \cdot \bvec{X}^\sigma 
		\qqand
		r \cdot F \eeq \sum_{\sigma  \in \N^k} r \cdot \extract{\sigma}{F} \cdot \bvec{X}^\sigma ~.
	\]%
	For states $\sigma = (\sigma_1,\, \ldots,\, \sigma_k)$ and $\tau = (\tau_1,\, \ldots,\, \tau_k)$, we define $\sigma + \tau = (\sigma_1 + \tau_1,\, \ldots,\, \sigma_k + \tau_k)$.
	The \emph{multiplication}~$F \cdot G$ is given as their Cauchy product (or discrete convolution)
	\[
		F \cdot G \eeq \sum_{\sigma,\tau  \in \N^k} \extract{\sigma}{F} \cdot \extract{\tau}{G} \cdot \bvec{X}^{\sigma + \tau}~.
	\]%
\end{definition}%
\noindent{}Drawing coefficients from the extended reals enables us to define a \emph{complete lattice} on FPSs in \Cref{sec:semantics}.
Our analyses in \Cref{sec:overapprox} will, however, only consider FPSs with $F \ll \infty$.

\section{Generating Functions}
\label{sec:gf}

\begin{quote}
	A generating function is a device somewhat similar to a bag.
	Instead of carrying many little objects detachedly, which could be embarrassing, we put them all in a bag, and then we have only one object to carry, the bag.\\[-1.00\baselineskip]
	\begin{flushright}
	--- George P\'{o}lya~\cite{polya1954mathematics}
\end{flushright}
\end{quote}%
Formal power series pose merely a particular way of encoding an infinite $k$-dimensional array as yet another infinitary object, but we still carry all objects forming the array (the coefficients of the FPS) detachedly and there seems to be no advantage in this particular encoding.
It even seems more bulky.
We will now, however, see that this bulky encoding can be turned into a one-object bag carrying all our objects: the \emph{generating function}.%
\begin{definition}[Generating Functions]\label{def:generating-functions}
The \emph{generating function} of a formal power series $F = \sum_{\sigma  \in \N^k} [\sigma]_F \cdot \bvec{X}^\sigma \in \ps$ with $F \ll \infty$ is defined as the \emph{partial} function
	\[
		f \colon \quad [0,\, 1]^k \ppto \PosReals, \quad (x_1,\, \ldots,\, x_k) \mmapsto \quad\sum_{\mathclap{\sigma = (\sigma_1,\ldots,\sigma_k) \in \N^k}}~ \extract{\sigma}{F} \cdot x_1^{\sigma_1} \cdots x_k^{\sigma_k}~.
	\]%
\end{definition}%
\noindent{}%
In other words: in order to turn an FPS into its generating function, we merely treat every \emph{formal} indeterminate $X_i$ as an \emph{\enquote{actual}} indeterminate $x_i$, and the formal multiplications and the formal sum also as \emph{\enquote{actual}} ones.
The generating function $f$ of $F$ is \emph{uniquely determined} by $F$ as we require all coefficients of $F$ to be non-negative, and so the ordering of the summands is irrelevant:
For a given point $\vec{x} \in [0,\, 1]^k$, the sum defining $f(\vec{x})$ either converges \emph{absolutely} to some positive real or diverges absolutely to $\infty$.
In the latter case, $f$ is undefined at $\vec{x}$ and hence $f$ may indeed be partial.

Since generating functions stem from formal power series, they are infinitely often differentiable at $\vec{0} = (0,\ldots,0)$.
Because of that, we can recover $F$ from $f$ as the (multivariate) Taylor expansion of $f$ at~$\vec{0}$.%
\begin{definition}[Multivariate Derivatives and Taylor Expansions]\label{def:mult_taylor}
For $\sigma = (\sigma_1,\ldots,\sigma_k) \in \N^k$, we write $f^{(\sigma)}$ for the function $f$ differentiated $\sigma_1$ times in $x_1$, $\sigma_2$ times in $x_2$, and so on.
If $f$ is infinitely often differentiable at $\vec{0}$, then the \emph{Taylor expansion of $f$ at $\vec{0}$} is given by
\[
	\sum_{\sigma \in \N^k}~ \frac{f^{(\sigma)}\left(\,\vec{0}\,\right)}{\sigma_1!  \mathbin{\cdots}  \sigma_k!} \cdot x_1^{\sigma_1} \cdots x_k^{\sigma_k}~.
\]
\end{definition}%
\noindent{}%
If we replace every indeterminate $x_i$ by the \emph{formal} indeterminate $X_i$ in the Taylor expansion of generating function $f$ of $F$, then we obtain the formal power series~$F$.
It is in precisely that sense, that $f$ \emph{generates} $F$.%
\begin{example}[Formal Power Series and Generating Functions]\label{ex:ps-gf}
	Consider the infinite \mbox{(1-dimensional)} sequence $\sfrac{1}{2},\, \sfrac{1}{4},\, \sfrac{1}{8},\, \sfrac{1}{16},\, \ldots$.
	Its (univariate) FPS --- the entity carrying all coefficients detachedly --- is given as
	\[
		\frac{1}{2} + \frac{1}{4} X + \frac{1}{8}X^2 + \frac{1}{16}X^3 + \frac{1}{32}X^4 + \frac{1}{64}X^5+ \frac{1}{128}X^6 + \frac{1}{256}X^7 + \ldots~. \tag{$\dagger$}
	\]
	On the other hand, its generating function --- the bag --- is given concisely by
	\[
		\frac{1}{2 - x}~. \tag{$\flat$}
	\]
	Figuratively speaking, $(\dagger)$ is itself the infinite sequence $a_n \coloneqq\tfrac{1}{2^n}$, whereas $(\flat)$~is a bag with the label \mbox{\enquote{infinite sequence $a_n\coloneqq\tfrac{1}{2^n}$}}.
	The fact that $(\dagger)$ generates $(\flat)$, follows from the Taylor expansion of $\tfrac{1}{2 - x}$ at~$0$ being $\frac{1}{2} + \frac{1}{4}x + \frac{1}{8}x^2 + \ldots$. \hfill$\triangle$
\end{example}%
\noindent{}%
The potential of generating functions is that manipulations to the functions ---~i.e.\ to the concise representations~--- are in a one-to-one correspondence to the associated manipulations to FPSs~\cite{graham1994concrete}.
For instance, if $f(x)$ is the generating function of $F$ encoding the sequence $a_1,\, a_2,\, a_3,\, \ldots$, then the function $f(x) \cdot x$ is the generating function of $F \cdot X$ which encodes the sequence $0,\, a_1,\, a_2,\, a_3,\, \ldots$

As another example for correspondence between operations on FPSs and generating functions, if $f(x)$ and $g(x)$ are the generating functions of $F$ and $G$, respectively, then $f(x) + g(x)$ is the generating function of $F + G$.
\begin{example}[Manipulating to Generating Functions]
	Revisiting Example \ref{ex:ps-gf}, if we multiply $\tfrac{1}{2 - x}$ by $x$, we change the label on our bag from \enquote{infinite sequence $a_n\coloneqq\tfrac{1}{2^n}$} to \mbox{\enquote{a $0$ followed by an infinite sequence $a_{n+1}\coloneqq \tfrac{1}{2^n}$}} and --- just by changing the label --- the bag will now contain what it says on its label. 
	Indeed, the Taylor expansion of~$\tfrac{x}{2 - x}$~at~$0$ is $0 + \frac{1}{2}x + \frac{1}{4}x^2 + \frac{1}{8}x^3 + \frac{1}{16}x^4 + \ldots$ encoding the sequence $0,\, \sfrac{1}{2},\, \sfrac{1}{4},\, \sfrac{1}{8},\, \sfrac{1}{16},\, \ldots$
	\hfill$\triangle$
\end{example}%
\noindent{}%
Due to the close correspondence of FPSs and generating functions~\cite{graham1994concrete}, we use both concepts interchangeably, as is common in most mathematical literature.
We mostly use FPSs for definitions and semantics, and generating functions in calculations and examples.

\medskip\noindent\textbf{Probability Generating Functions.}
We now use formal power series to represent probability distributions.%
\begin{definition}[Probability Subdistribution]
	A \emph{probability subdistribution} (or simply subdistribution) over $\N^k$ is a function
	\[
	\mu \colon\quad \N^k \tto [0,1],\qquad \text{such that}\qquad \mass{\mu} \eeq \sum_{\sigma \in \N^k}\mu(\sigma) \leq 1~. 
	\]
	We call $\mass{\mu}$ the \emph{mass of $\mu$}.
	We say that $\mu$ is a \emph{(full) distribution} if $\mass{\mu} = 1$, and a \emph{proper subdistribution} if $\mass{\mu} < 1$.
	The set of all subdistributions on $\N^k$ is denoted by $\subdistr(\N^k)$ and the set of all full distributions by $\distr(\N^k)$.
\end{definition}%
\noindent{}%
We need subdistributions for capturing non-termination.
The \enquote{missing} probability mass $1 - \mass{\mu}$ precisely models the probability of non-termination.

The generating function of a (sub-)distribution is called a \emph{probability generating function}. Many properties of a distribution $\mu$ can be read off from its generating function $G_\mu$ in a simple way. We demonstrate how to extract a few common properties in the following example.
\begin{example}[Geometric Distribution PGF]\label{ex:geo-gf}
	Recall \cref{ex:ps-gf}. The presented formal power series encodes a \emph{geometric distribution} $\mu_{\mathit{geo}}$ with parameter $\sfrac{1}{2}$ of a single variable $X$.
   The fact that $\mu_{\mathit{geo}}$ is a proper probability distribution, for instance, can easily be verified computing $G_\mathit{geo}(1) = \tfrac{1}{2 - 1} = 1$. The expected value of $X$ is given by $G_{\mathit{geo}}'(1) = \tfrac{1}{(2-1)^2} = 1$.\hfill$\triangle$
\end{example}

\medskip\noindent\textbf{Extracting Common Properties.}
	Important information about probability distributions is, for instance, the first and higher moments.
	In general, the $k^\textnormal{th}$ factorial moment of variable $X_i$ can be extracted from a PGF by computing $\tfrac{\partial^k G}{\partial X_i^k}(1,\ldots,1)$.\footnote{In general, one must take the limit $X_i \to 1$ from below.} This includes the mass $\mass{G}$ as the $0^\textnormal{th}$ moment. The marginal distribution of variable $X_i$ can simply be extracted from $G$ by $G(1,\ldots, X_i,\ldots,1)$.
	We also note that PGFs can treat \emph{stochastic independence}. 
	For instance, for a bivariate PGF $H$ we can check for stochastic independence of the variables $X$ and~$Y$ by checking whether $H(X,Y) = H(X,1) \cdot H(1,Y)$.

\section{FPS Semantics for \pgcl}\label{sec:semantics}

In this section, we give denotational semantics to probabilistic programs in terms of FPS transformers and establish some elementary properties useful for program analysis.
We begin by endowing FPSs and PGFs with an order structure:%
\begin{definition}[Order on \ps]
	For all $F, G \in \ps$, let
	\[
		F \ppreceq G
		\qqiff
		\forall\, \sigma \in \N^k \colon\quad \extract{\sigma} G \lleq \extract{\sigma} F ~.
    \]%
\end{definition}%
\begin{restatable}[Completeness of $\preceq$ on \ps]{lemma}{cocpos}
	\label{lem:completeness_of_pos}
    $(\ps,\, {\preceq})$ is a complete latttice.
\end{restatable}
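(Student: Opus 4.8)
The plan is to reduce the claim to two standard facts: that the extended non-negative reals $(\PosRealsInf, \leq)$ form a complete lattice (every subset, including the empty one, has an infimum and a supremum in $\PosRealsInf$, with $\inf \emptyset = \infty$ and $\sup \emptyset = 0$), and that complete lattices are closed under forming products and passing to the order-dual. Concretely, as a set $\ps$ is in bijection with the function space $(\PosRealsInf)^{\N^k}$ via $F \mapsto \extract{\cdot}{F}$, and under this bijection the relation $\preceq$ is precisely the \emph{reverse} of the coefficient-wise (product) order: $F \preceq G$ iff $\extract{\sigma}{G} \leq \extract{\sigma}{F}$ for all $\sigma \in \N^k$. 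In particular $\preceq$ inherits reflexivity, antisymmetry and transitivity from $\leq$ on $\PosRealsInf$, so it is a partial order, and it then remains only to exhibit arbitrary suprema and infima.

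The key step is to write these down explicitly. Given an arbitrary family $S \subseteq \ps$, I would define two formal power series $U, L \in \ps$ by
\[
	\extract{\sigma}{U} \coloneqq \inf_{F \in S}\, \extract{\sigma}{F}
	\qquad\text{and}\qquad
	\extract{\sigma}{L} \coloneqq \sup_{F \in S}\, \extract{\sigma}{F}
	\qquad\text{for every } \sigma \in \N^k~,
\]
which are legitimate FPSs exactly because $\PosRealsInf$ absorbs all infima and suprema — this is the role of allowing $\infty$ as a coefficient. Then I would verify, by a routine unfolding of the definition of $\preceq$, that $U$ is the $\preceq$-least upper bound of $S$: it is an upper bound since $\extract{\sigma}{U} = \inf_{F \in S}\extract{\sigma}{F} \leq \extract{\sigma}{G}$ for each $G \in S$ and each $\sigma$, i.e.\ $G \preceq U$; and if $H \in \ps$ satisfies $G \preceq H$ for all $G \in S$, then $\extract{\sigma}{H} \leq \extract{\sigma}{G}$ for all $G$, hence $\extract{\sigma}{H} \leq \inf_{G \in S}\extract{\sigma}{G} = \extract{\sigma}{U}$, that is $U \preceq H$. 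The dual computation shows that $L$ is the $\preceq$-greatest lower bound of $S$. (Alternatively one could verify only that all suprema exist and invoke the standard fact that this already forces all infima to exist, but since both extremal operations are used in the fixed-point constructions of \Cref{sec:semantics}, I would keep both explicit formulas.) For concreteness I would also note the extremal elements: the $\preceq$-least element is the series with all coefficients $\infty$, and the $\preceq$-greatest element is the zero series, obtained as $U$ and $L$ for $S = \emptyset$ respectively.

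I do not expect a genuine obstacle here; the statement is essentially a bookkeeping exercise. The one point that needs care is the reversed direction of the order: because $\preceq$ is the \emph{opposite} of the coefficient-wise order, the $\preceq$-supremum of a family is computed by a coefficient-wise \emph{infimum} and the $\preceq$-infimum by a coefficient-wise \emph{supremum}, and it is easy to slip here. Beyond that, the only subtlety is making sure the degenerate cases — coefficients equal to $\infty$ and the empty family — are handled, which is automatic as long as one works in $\PosRealsInf$ throughout rather than in $\PosReals$.
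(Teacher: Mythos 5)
Your proof is correct and follows essentially the same route as the paper's: verify the partial-order axioms coefficient-wise and exhibit arbitrary extrema by pushing the (complete) lattice structure of $\PosRealsInf$ into the coefficients. The one substantive point of divergence is the direction of the order. You take the printed definition of $\preceq$ at face value, under which $F \preceq G$ reverses the coefficient-wise comparison, and accordingly you compute the $\preceq$-supremum of a family as a coefficient-wise \emph{infimum} and identify the all-$\infty$ series as the least element. The paper's own proof does the opposite: it sets $\extract{\sigma}{\sup S} = \sup_{F \in S}\extract{\sigma}{F}$ and takes $\sup \emptyset$ to be the zero series, i.e., it silently treats $\preceq$ as the non-reversed pointwise order. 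That reading is the one consistent with the rest of the paper, where $\boldsymbol{0}$ is declared the least element and the Kleene iteration $\sup_{n}\Phi_{B,P}^{n}(\boldsymbol{0})$ ascends from the zero series; the inequality in the displayed definition of $\preceq$ is evidently a typo. Since a poset is a complete lattice iff its opposite is, your argument does establish the lemma exactly as stated, but your explicit formulas for suprema, infima and the two extremal elements are the duals of the ones actually needed downstream, so they would have to be flipped before being used in the fixed-point constructions of the semantics.
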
%

\subsection{FPS Transformer Semantics}\label{ssec:semantics}

Recall that we assume programs to range over exactly $k$ variables with valuations in $\N^k$.
Our program syntax is similar to Kozen~\cite{Kozen79} and McIver \& Morgan~\cite{DBLP:series/mcs/McIverM05}.%
\begin{definition}[Syntax of \pgcl{}~\cite{Kozen79,DBLP:series/mcs/McIverM05}]
	A program $P$ in \emph{probabilistic Guarded Command Language} (\pgcl{}) adheres to the grammar%
	\begin{align*}
	P \Coloneqq~& \pskip{} \bigmid 
	\codify{x}_{i} := E \bigmid 
	P;P
	 \bigmid\{P\}\ [p]\ \{P\}\\
	 ~&\bigmid\pif{}(B)\ \{P\}\ \pelse{}\ \{P\}
	 \bigmid\while{}\:(B)\ \{P\}~,
	\end{align*}%
	where $\codify{x}_i \in \{\codify{x}_1, \ldots, \codify{x}_k\}$ is a program variable, $E$ is an arithmetic expression over program variables, $p \in [0,1]$ is a probability, and $B$ is a predicate (called \emph{guard}) over program variables.
\end{definition}%
\noindent{}%
The FPS semantics of \pgcl{} will be defined in a forward denotational style, where the program variables $\codify{x}_1, \ldots, \codify{x}_k$ correspond to the formal indeterminates $X_1, \ldots, X_k$ of FPSs.

For handling assignments, $\codify{if}$-conditionals and $\codify{while}$-loops, we need some auxiliary functions on FPSs:
For an arithmetic expression $E$ over program variables, we denote by $\eval{E}{\sigma}$ the evaluation of $E$ in program state $\sigma$.
For a predicate~$B \subseteq \N^k$ and FPS $F$, we define the \emph{restriction of $F$ to $B$} by%
\[
	\restrict{F}{B} \ccoloneqq \sum_{\sigma \in B} \extract{\sigma}{F} \cdot \bvec{X}^\sigma~,
\]%
i.e.\ $\restrict{F}{B}$ is the FPS obtained from $F$ by setting all coefficients $\extract{\sigma}{F}$ where $\sigma \not\in B$ to $0$. 
Using these prerequisites, our FPS transformer semantics is given as follows:%
\begin{definition}[FPS Semantics of \pgcl{}]\label{def:semantics}
	The semantics $\sem{P} \colon \ps \to \ps$ of a loop-free \pgcl program $P$ is given according to the upper part of \textnormal{\Cref{tab:semantics}}.
	
	The \emph{unfolding operator} $\Phi_{B,P}$ for the loop $\WHILEDO{B}{P}$ is defined by%
	\[
	\Phi_{B,P} \colon \quad (\ps \to \ps) \to (\ps \to \ps), \quad \psi \mmapsto \lambda F\boldsymbol{.}~ \langle F\rangle_{\neg B} \pplus \psi \Bigl( \sem{P}\bigl(\langle F \rangle_B \bigr) \Bigr).
	\]%
	The partial order  $(\ps,\, {\preceq})$ extends to a partial order $\bigl(\ps \to \ps,\, {\sqsubseteq}\bigr)$ on FPS transformers by a point-wise lifting of $\preceq$.
	The least element of this partial order is the transformer~$\boldsymbol{0} = \lambda F\boldsymbol{.}~0$ mapping any FPS $F$ to the zero series.
	The semantics of $\WHILEDO{B}{P}$ is then given by the least fixed point (with respect to $\sqsubseteq$) of its unfolding operator, i.e.%
	\[
	\sem{\WHILEDO{B}{P}} \eeq \lfp \Phi_{B,P} ~.
	\]%
\end{definition}%
\begin{table*}[b]
	\caption{FPS transformer semantics of \pgcl programs.}
	\label{tab:semantics}
	\centering
	\def\arraystretch{1.25}%
	\begin{tabular*}{.88\textwidth}{l@{\qquad}l}
		$P$ & $\sem{P}(F)$\\[.25em]
		\hline
		\hline
		\pskip{} & $F$ \\
		$\codify{x}_i \coloneqq E$ & $\sum_{\sigma \in \N^k}\mu_\sigma X_{1}^{\sigma_1}\cdots X_{i}^{\eval{E}{\sigma}}\cdots X_{k}^{\sigma_k}$\\
		$\PCHOICE{P_1}{p}{P_2}$ & $p \cdot \sem{P_1}(F) \pplus (1-p) \cdot \sem{P_2}(F)$\\
		$\ITE{B}{P_1}{P_2}$ & $\sem{P_1}\bigl(\langle F\rangle_B \bigr) \pplus \sem{P_2} \bigl( \langle F \rangle_{\neg B} \bigr)$\\
		$\compose{P_1}{P_2}$ & $\sem{P_2}\bigl( \sem{P_1}(F) \bigl)$\\[.5em]
		\hline\\[-1em]
		$\while{}(B)\{P\}$ &$\bigl(\lfp\, \Phi_{B,P}\bigr)(F)$~, \quad{}for \\
		& \qquad $\Phi_{B,P}(\psi) \eeq \lambda F\boldsymbol{.}~ \langle F\rangle_{\neg B} \pplus \psi \Bigl( \sem{P}\bigl(\langle F \rangle_B \bigr) \Bigr)$
	\end{tabular*}%
\end{table*}%
\pagebreak{}%
\begin{example}
	\marginpar{\parbox{\linewidth}{
		\abovedisplayskip=2.6\baselineskip%
				\belowdisplayskip=-1em%
				\begin{align*}
					& \annotate{G} \\
					& P' \\
					& \annotate{G'} \\
				\end{align*}%
				\normalsize%
		}} %
    Consider the program $P = \compose{\PCHOICE{\ASSIGN{\codify{x}}{0}}{\sfrac{1}{2}}{\ASSIGN{\codify{x}}{1}}}{\ASSIGN{\codify{c}}{c+1}}$ and the input PGF $G = 1$, which denotes a point mass on state $\sigma = \vec{0}$.
    	Using the annotation style shown in the left margin, denoting that $\semapp{P'}{G} = G'$, we calculate $\semapp{P}{G}$ as follows:
	\begin{align*}
		& \annotate{1} \\
		& \compose{\!\PCHOICE{\ASSIGN{\codify{x}}{0}}{\sfrac{1}{2}}{\ASSIGN{\codify{x}}{1}}}{} \\
		& \annotate{\tfrac{1}{2} + \tfrac{X}{2}} \\
		&\ASSIGN{\codify{c}}{\codify{c}+1}\\
		& \annotate{\tfrac{C}{2} + \tfrac{CX}{2}}
	\end{align*}%
	As for the semantics of $\ASSIGN{c}{c + 1}$, see \cref{tab:ops}. \hfill$\triangle$
\end{example}%
\begin{table}[b]
	\caption{Common assignments and their effects on the input PGF $F(X,Y)$.}
	\label{tab:ops}
	\centering
	\def\arraystretch{1.25}%
	\begin{tabular*}{0.5\textwidth}{l@{\qquad}l}
		$P$ & $\sem{P}(F)$\\[.25em]
		\hline
		\hline
		$\ASSIGN{\codify{x}}{\codify{x + } k}$ & $X^k \cdot F(X,Y)$ \\
		$\ASSIGN{\codify{x}}{k \cdot \codify{x}}$ & $F(X^k, Y)$ \\
		$\ASSIGN{\codify{x}}{\codify{x + y}}$ & $F(X, XY)$ \\
	\end{tabular*}
\end{table}%
\noindent{}%
Before we study how our FPS transformers behave on PGFs in particular, we now first argue that our FPS semantics is well-defined.
While evident for loop-free programs, we appeal to the Kleene Fixed Point Theorem for loops~\cite{DBLP:journals/ipl/LassezNS82}, which requires $\omega$-continuous functions.%
\begin{restatable}[$\boldsymbol{\omega}$-continuity of \pgcl Semantics]{theorem}{cont}\label{thm:cont}
	The semantic functional~$\sem{\: \cdot \:}$ is $\omega$-continuous, i.e.\ for all programs $P \in \pgcl$ and all increasing $\omega $-chains \mbox{$F_1 \preceq F_2 \preceq \ldots$  in $\ps$},%
	\[
		\semapp{P}{\,\sup_{n \in \N} F_n} \eeq \sup_{n \in \N}~ \semapp{P}{F_n}~.
	\]
\end{restatable}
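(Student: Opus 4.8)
The plan is to prove $\omega$-continuity of $\sem{\cdot}$ by structural induction on the program $P$. The base cases (\pskip{} and assignments $\codify{x}_i \coloneqq E$) and the compositional cases (sequencing, conditional choice, probabilistic choice, and the loop body composed into the unfolding operator) are handled by showing that each basic FPS operation we use --- coefficient-wise addition, scaling by a constant $r \in \PosRealsInf$, restriction $\langle \cdot \rangle_B$ to a predicate, and the monomial substitution $X_i^{\sigma_i} \mapsto X_i^{\eval{E}{\sigma}}$ induced by an assignment --- is itself $\omega$-continuous, together with the standard fact that composition of $\omega$-continuous functions is $\omega$-continuous and that finite (point-wise) sums and scalar multiples of $\omega$-continuous functions are again $\omega$-continuous. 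The key observation making all of these work is that suprema of $\omega$-chains in $(\ps, \preceq)$ are computed \emph{coefficient-wise}: by \Cref{lem:completeness_of_pos} the lattice is complete, and since $F \preceq G$ means $\extract{\sigma}{G} \leq \extract{\sigma}{F}$ for every $\sigma$, one checks that $\sup_{n} F_n$ is exactly the FPS whose $\sigma$-coefficient is $\inf_n \extract{\sigma}{F_n}$ (a decreasing sequence in $\PosRealsInf$). Hence every operation that acts coefficient-wise, or permutes/merges coefficients monotonically, commutes with these suprema; for the operations involving infinite sums over $\sigma$ (e.g. the assignment rule, where infinitely many coefficients of $F$ may be collapsed onto a single monomial of the output) one appeals to the fact that in $\PosRealsInf$ an infinite sum is itself a supremum of finite partial sums, so the two suprema may be exchanged.

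Concretely, I would proceed as follows. First, establish the helper lemma that $\sup_{n} F_n$ in $(\ps,\preceq)$ is the coefficient-wise infimum, and record that it is also preserved by $\preceq$-monotone maps whenever one can show the candidate limit is again the image of the limit. Second, dispatch \pskip{} (the identity, trivially continuous) and probabilistic choice and \codify{if} (finite positive combinations of restrictions and recursive calls, continuous by the closure properties above, using that $\langle \sup_n F_n\rangle_B = \sup_n \langle F_n \rangle_B$ since restriction just zeroes out some coefficients and thus acts coefficient-wise). Third, handle the assignment $\codify{x}_i \coloneqq E$: here I need $\semapp{\ASSIGN{\codify{x}_i}{E}}{\sup_n F_n} = \sup_n \semapp{\ASSIGN{\codify{x}_i}{E}}{F_n}$, which follows by computing the $\tau$-coefficient of each side --- on the left it is $\sum\{\inf_n\extract{\sigma}{F_n} : \sigma_i \text{-shifted state maps to } \tau\}$, on the right it is $\inf_n \sum\{\extract{\sigma}{F_n} : \ldots\}$ --- and these agree because each inner sum is an increasing limit of finite sums and finite sums commute with infima of decreasing sequences (equivalently, monotone convergence / Tonelli for sums over $\PosRealsInf$). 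Fourth, handle sequential composition $\compose{P_1}{P_2}$ by the standard argument that the composition of two $\omega$-continuous maps is $\omega$-continuous (using the induction hypothesis for both $P_1$ and $P_2$, and that $\sem{P_1}$ maps $\omega$-chains to $\omega$-chains by monotonicity). Fifth, for the loop $\WHILEDO{B}{P}$, I would first show that the unfolding operator $\Phi_{B,P}$ maps $\omega$-continuous transformers to $\omega$-continuous transformers --- this uses continuity of restriction, of $\sem{P}$ (induction hypothesis), of $+$, and of composition --- and that $\Phi_{B,P}$ is itself $\omega$-continuous as a functional on $(\ps \to \ps, \sqsubseteq)$; then $\lfp\,\Phi_{B,P} = \sup_m \Phi_{B,P}^m(\boldsymbol{0})$ by Kleene, and this supremum of $\omega$-continuous transformers is $\omega$-continuous (a pointwise supremum of $\omega$-continuous maps into a complete lattice is $\omega$-continuous, because the two suprema --- over the chain index $n$ and over the Kleene index $m$ --- may be interchanged, both being coefficient-wise infima).

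Monotonicity of $\sem{P}$ should be recorded first, either as a separate easy induction or as an immediate by-product, since $\omega$-continuity as stated presupposes that $\semapp{P}{F_n}$ forms an $\omega$-chain and that the relevant suprema exist --- both of which need monotonicity plus completeness of the lattice (\Cref{lem:completeness_of_pos}).

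The main obstacle, I expect, is the assignment case combined with the loop case, for the same underlying reason: both require interchanging a supremum taken along the $\omega$-chain $(F_n)_n$ with an \emph{infinite} sum of coefficients (over all source states $\sigma$ collapsed onto a target monomial in the assignment case; over the infinitely many Kleene iterates in the loop case). This interchange is not a purely formal manipulation --- it is exactly where the choice of the extended reals $\PosRealsInf$ and the ``reversed'' order $\preceq$ pays off, since there both suprema are genuine suprema in a complete lattice and the Tonelli-type exchange holds unconditionally. I would therefore isolate this exchange as a single auxiliary lemma about $\PosRealsInf$ (that for a countable family of decreasing sequences $a^{(j)}_1 \geq a^{(j)}_2 \geq \cdots$ one has $\sum_j \inf_n a^{(j)}_n = \inf_n \sum_j a^{(j)}_n$) and then apply it uniformly in both cases, keeping the structural induction itself short and routine.
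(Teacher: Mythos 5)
Your overall architecture coincides with the paper's proof: structural induction on $P$, the observation that suprema of $\omega$-chains in $(\ps,\preceq)$ are computed coefficient-wise, an interchange of that supremum with the infinite coefficient sums arising in the assignment case, and, for loops, an inner induction showing each Kleene iterate $\Phi_{B,P}^n(\boldsymbol{0})$ is continuous followed by a swap of the two suprema. However, there is one genuine flaw: you read the order so that $\sup_n F_n$ is the FPS whose $\sigma$-coefficient is $\inf_n \extract{\sigma}{F_n}$, a \emph{decreasing} limit, and your single auxiliary lemma accordingly asserts $\sum_j \inf_n a^{(j)}_n = \inf_n \sum_j a^{(j)}_n$ for countable families of decreasing sequences in $\PosRealsInf$. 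That identity is false: take $a^{(j)}_n = 1$ if $j \geq n$ and $0$ otherwise; each sequence is decreasing in $n$ with infimum $0$, so the left-hand side is $0$, while $\sum_j a^{(j)}_n = \infty$ for every $n$, so the right-hand side is $\infty$. The unconditional Tonelli-type exchange in $\PosRealsInf$ holds for \emph{increasing} sequences and suprema (monotone convergence), not for decreasing ones, where a dominated-convergence hypothesis would be needed. Since both the assignment case and the loop case of your induction rest on this lemma, the proof as written does not go through.

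The source of the problem is the orientation of $\preceq$: the displayed definition has its two sides swapped (a typo), but the proof of \Cref{lem:completeness_of_pos} and every subsequent use make clear that $F \preceq G$ is meant to be $\extract{\sigma}{F} \leq \extract{\sigma}{G}$ for all $\sigma$, so that $\sup_n F_n$ is the coefficient-wise \emph{supremum} of increasing sequences. With that orientation your auxiliary lemma becomes the (true) monotone sequence theorem for sums over $\PosRealsInf$, which is exactly what the paper invokes, and the rest of your argument --- including the sequential-composition case, which the paper's proof in fact omits, and your correct observation that the assignment case genuinely collapses infinitely many coefficients onto a single monomial --- then goes through and essentially reproduces the paper's proof.
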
%
\begin{restatable}[Well-definedness of FPS Semantics]{theorem}{welldef}\label{thm:well_definedness}
	The semantics functional $\sem{\: \cdot \:}$ is well-defined, i.e.\ the semantics of any loop $\WHILEDO{B}{P}$ exists uniquely and can \mbox{be written as}%
	\[
		\sem{\WHILEDO{B}{P}} \eeq \lfp \Phi_{B,P} \eeq \sup_{n \in \N}~ \Phi_{B,P}^n(\boldsymbol{0})~.
	\]
\end{restatable}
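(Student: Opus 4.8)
The plan is to establish well-definedness by structural induction on $P$, the only genuinely non-trivial case being the loop, which is handled via the Kleene Fixed Point Theorem. For the loop-free constructs in the upper part of \Cref{tab:semantics} there is nothing to do beyond observing that $\ps$ is closed under all operations used --- restriction $\langle \cdot \rangle_B$, pointwise addition, scaling by $r \in \PosRealsInf$, and the exponent substitution $\sigma_i \mapsto \eval{E}{\sigma}$ occurring in assignments --- which is immediate from the definition of $\ps$, since coefficients range over the \emph{extended} reals and hence no formal sum can ``leave'' $\ps$. Thus each such $\sem{P}$ is a total function $\ps \to \ps$. For the loop case I would first record, as a routine consequence of \Cref{lem:completeness_of_pos}, that the pointwise lifting $(\ps \to \ps,\, \sqsubseteq)$ is again a complete lattice --- in particular an $\omega$-cpo with least element $\boldsymbol{0}$ --- in which suprema of $\omega$-chains are computed pointwise.

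Given a loop $\WHILEDO{B}{P}$, the induction hypothesis provides that $\sem{P}$ is a well-defined (total) function $\ps \to \ps$, so $\Phi_{B,P}$ is a well-defined endofunction on $\ps \to \ps$. The crux is to show that $\Phi_{B,P}$ is $\omega$-continuous. For an ascending chain $\psi_0 \sqsubseteq \psi_1 \sqsubseteq \cdots$ of FPS transformers and an arbitrary $F \in \ps$, I would compute
\[ \Phi_{B,P}\bigl(\sup_n \psi_n\bigr)(F) \eeq \langle F\rangle_{\neg B} \pplus \sup_n \psi_n\bigl(\sem{P}(\langle F\rangle_B)\bigr) \eeq \sup_n \Bigl( \langle F\rangle_{\neg B} \pplus \psi_n\bigl(\sem{P}(\langle F\rangle_B)\bigr)\Bigr) \eeq \bigl(\sup_n \Phi_{B,P}(\psi_n)\bigr)(F)~, \]
where the first equality uses that suprema in $\ps \to \ps$ are taken pointwise --- so only the value of $\psi_n$ at the single, $\psi$-independent argument $\sem{P}(\langle F\rangle_B)$ matters --- and the second uses that adding a fixed FPS $\langle F\rangle_{\neg B}$ is $\omega$-continuous on $(\ps,\, \preceq)$, which in turn holds because $+$ is defined coefficient-wise and addition on $\PosRealsInf$ preserves suprema of chains. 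Note that this argument requires $\sem{P}$ only to be \emph{total}, not continuous, precisely because in $\Phi_{B,P}(\psi)$ the transformer $\psi$ is applied to an argument that does not depend on $\psi$.

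Since $\omega$-continuity implies monotonicity, the iterates $\Phi_{B,P}^n(\boldsymbol{0})$ form an $\omega$-chain, and the Kleene Fixed Point Theorem then yields that $\Phi_{B,P}$ has a least fixed point with $\lfp \Phi_{B,P} = \sup_{n\in\N} \Phi_{B,P}^n(\boldsymbol{0})$; least fixed points are unique by definition, so $\sem{\WHILEDO{B}{P}}$ exists uniquely and has the claimed closed form, which closes the induction. (Alternatively one could invoke \Cref{thm:cont} to obtain totality of the body's semantics, but as just observed $\omega$-continuity of $\sem{P}$ is not actually needed at this point.)

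I expect the main obstacle to be the bookkeeping in the second paragraph: correctly recognizing that $\omega$-continuity of $\Phi_{B,P}$ decomposes into continuity of ``evaluation at a fixed argument'' (trivial, by pointwise suprema) composed with ``add a fixed FPS'', and carefully justifying the two facts on which this rests --- that the pointwise order makes $\ps \to \ps$ an $\omega$-cpo with suprema computed pointwise, and that coefficient-wise addition remains continuous in the presence of the top element $\infty$ of $\PosRealsInf$. The remainder is a routine structural induction.
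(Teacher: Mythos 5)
Your proposal is correct and follows essentially the same route as the paper: the paper likewise proves $\omega$-continuity of $\Phi_{B,P}$ by exactly your two-step decomposition (suprema in $\ps \to \ps$ are pointwise, so only the value of $\psi_n$ at the fixed argument $\sem{P}(\langle F\rangle_B)$ matters, and then the monotone sequence theorem applied coefficient-wise handles the addition of $\langle F\rangle_{\neg B}$), after which it invokes the Kleene Fixed Point Theorem on the complete lattice $(\ps \to \ps, \sqsubseteq)$. Your extra observations --- the structural-induction wrapper for totality of the loop-free cases and the remark that continuity of $\sem{P}$ is not needed here --- are correct refinements that the paper leaves implicit.
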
%

\subsection{Healthiness Conditions of FPS Transformers}\label{ssec:sem_props}
In this section we show basic, yet important, properties which follow from~\cite{Kozen79}.
For instance, for any input FPS $F$, the semantics of a program cannot yield as output an FPS with a mass larger than $\mass{F}$, i.e.\ \emph{programs cannot create mass}.%
\begin{restatable}[Mass Conservation]{theorem}{masscons}\label{thm:mass}
For every $P \in \pgcl$ and $F \in \ps$, we have $\mass{\sem{P}(F)\vphantom{\big(}} \lleq \mass{F}$.
\end{restatable}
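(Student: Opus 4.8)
The plan is to proceed by structural induction on the program $P$, showing that $\mass{\sem{P}(F)} \lleq \mass{F}$ for every $F \in \ps$. First I would handle the loop-free cases directly from \Cref{tab:semantics}. For $\pskip$ the claim is trivial. For an assignment $\ASSIGN{\codify{x}_i}{E}$, the output is obtained by merely relabelling monomials $\bvec{X}^\sigma \mapsto X_1^{\sigma_1}\cdots X_i^{\eval{E}{\sigma}}\cdots X_k^{\sigma_k}$ while keeping the coefficient $\extract{\sigma}{F}$; several distinct $\sigma$ may collapse onto the same monomial, so coefficients are \emph{added}, but the total sum of coefficients is unchanged, hence $\mass{\sem{\ASSIGN{\codify{x}_i}{E}}(F)} = \mass{F}$. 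For the probabilistic choice $\PCHOICE{P_1}{p}{P_2}$, linearity of $\mass{\cdot}$ together with the induction hypothesis gives $\mass{p\cdot\sem{P_1}(F) + (1-p)\cdot\sem{P_2}(F)} = p\,\mass{\sem{P_1}(F)} + (1-p)\,\mass{\sem{P_2}(F)} \lleq p\,\mass{F} + (1-p)\,\mass{F} = \mass{F}$. For sequential composition $\compose{P_1}{P_2}$, chaining the two induction hypotheses yields $\mass{\sem{P_2}(\sem{P_1}(F))} \lleq \mass{\sem{P_1}(F)} \lleq \mass{F}$.

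For the conditional $\ITE{B}{P_1}{P_2}$, I would first note the key auxiliary fact that $\mass{\restrict{F}{B}} + \mass{\restrict{F}{\neg B}} = \mass{F}$, since the restrictions partition the index set $\N^k$ into $B$ and its complement. Then $\mass{\sem{P_1}(\restrict{F}{B}) + \sem{P_2}(\restrict{F}{\neg B})} = \mass{\sem{P_1}(\restrict{F}{B})} + \mass{\sem{P_2}(\restrict{F}{\neg B})} \lleq \mass{\restrict{F}{B}} + \mass{\restrict{F}{\neg B}} = \mass{F}$, again using the induction hypotheses and linearity of mass. (One small point to record: all sums of coefficients here live in $\PosRealsInf$, so these manipulations are valid even when some masses are infinite, as $\PosRealsInf$ is closed under the relevant operations; but since we only ever \emph{bound} masses from above by $\mass{F}$, no subtraction is needed.)

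The real work is the loop case $\WHILEDO{B}{P}$. By \Cref{thm:well_definedness} we have $\sem{\WHILEDO{B}{P}}(F) = \bigl(\sup_{n\in\N}\Phi_{B,P}^n(\boldsymbol 0)\bigr)(F) = \sup_{n\in\N}\Phi_{B,P}^n(\boldsymbol 0)(F)$, where the last equality uses that $\sup$ in $(\ps\to\ps,\sqsubseteq)$ is computed pointwise. I would show by an inner induction on $n$ that $\mass{\Phi_{B,P}^n(\boldsymbol 0)(F)} \lleq \mass{F}$ for all $F$ and all $n$: the base case $n=0$ is $\mass{0} = 0 \lleq \mass{F}$, and the step uses the definition $\Phi_{B,P}(\psi)(F) = \restrict{F}{\neg B} + \psi(\sem{P}(\restrict{F}{B}))$ together with linearity of mass, the outer induction hypothesis on $P$, the inner induction hypothesis on $\psi = \Phi_{B,P}^n(\boldsymbol 0)$, and the partition fact: $\mass{\Phi_{B,P}^{n+1}(\boldsymbol 0)(F)} = \mass{\restrict{F}{\neg B}} + \mass{\Phi_{B,P}^n(\boldsymbol 0)(\sem{P}(\restrict{F}{B}))} \lleq \mass{\restrict{F}{\neg B}} + \mass{\sem{P}(\restrict{F}{B})} \lleq \mass{\restrict{F}{\neg B}} + \mass{\restrict{F}{B}} = \mass{F}$. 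Finally, I would pass to the supremum: since $\mass{\cdot}$ is itself the (monotone, $\omega$-continuous) supremum of the partial sums of coefficients, and the $\Phi_{B,P}^n(\boldsymbol 0)(F)$ form an increasing $\omega$-chain in $\ps$, we get $\mass{\sem{\WHILEDO{B}{P}}(F)} = \mass{\sup_n \Phi_{B,P}^n(\boldsymbol 0)(F)} = \sup_n \mass{\Phi_{B,P}^n(\boldsymbol 0)(F)} \lleq \mass{F}$.

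The main obstacle I anticipate is the interchange of $\mass{\cdot}$ with the supremum over the Kleene iterates in the loop case: one must argue that $\mass{\sup_n G_n} = \sup_n \mass{G_n}$ for an increasing $\omega$-chain $G_n$ in $\ps$. This follows from the monotone convergence theorem for sums of non-negative (extended-real) terms, since the coefficient-wise supremum is attained monotonically and summation over $\N^k$ commutes with monotone limits; I would state this as a small lemma or cite it inline. Everything else is a routine bookkeeping exercise in the linearity and additivity of $\mass{\cdot}$ under restriction, which I would relegate to a remark.
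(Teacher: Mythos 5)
Your proposal is correct, and the overall skeleton (structural induction on $P$, with the loop case reduced to a statement about the Kleene iterates) matches the paper's. The treatment of the loop case, however, is packaged differently. You bound each iterate directly --- an inner induction on $n$ showing $\mass{\Phi_{B,P}^n(\boldsymbol{0})(F)} \leq \mass{F}$ --- and then push the bound through the supremum using the interchange $\mass{\sup_n G_n} = \sup_n \mass{G_n}$ for increasing chains. The paper instead defines the sublattice $\ps_r = \{F \in \ps \mid \mass{F} \leq r\}$, restricts the unfolding operator to $\ps_r \to \ps_r$ (well-definedness of the restriction being exactly the outer induction hypothesis on the loop body), proves that $(\ps_r, \preceq)$ is an $\omega$-complete partial order, and concludes that the least fixed point applied to $G$ lands in $\ps_{\mass{G}}$. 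Both arguments hinge on the same analytic fact --- that mass commutes with suprema of $\omega$-chains, which the paper establishes by enumerating $\N^k$, rewriting the series as a supremum of partial sums, and applying the monotone sequence theorem, precisely the monotone-convergence justification you sketch. Your route is somewhat more elementary and self-contained: it avoids having to argue that the least fixed point of the restricted operator coincides with that of the unrestricted one, a step the paper states without elaboration. The paper's route buys a reusable structural fact ($\ps_r$ is a sub-CPO closed under $\Phi_{B,P}$), which is conceptually tidier if one wants to reason about other mass-bounded invariants. Your explicit verification of the loop-free cases (in particular the observation that assignment only regroups non-negative coefficients and hence preserves mass exactly) is more detailed than the paper, which dismisses these cases as straightforward.
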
%
\noindent{}%
A program $P$ is called \emph{mass conserving} if $\mass{\sem{P}(F)} = \mass{F}$ for all $F \in \ps$. Mass conservation has important implications for FPS transformers acting on PGFs: given as input a PGF, the semantics of a program yields a PGF.%
\begin{corollary}[PGF Transformers]
	For every $P \in \pgcl$ and $G \in \pgf$, we have $\semapp{P}{G} \in \pgf$.
\end{corollary}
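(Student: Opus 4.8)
The plan is to reduce the claim to Mass Conservation (\Cref{thm:mass}) once it is made explicit how $\pgf$ sits inside $\ps$. First I would record the elementary observation that, under the identification of an FPS with its generating function, a PGF is precisely an FPS of mass at most one. Indeed, if $G$ is the generating function of a subdistribution $\mu \in \subdistr(\N^k)$, then its underlying FPS is $\sum_{\sigma \in \N^k} \mu(\sigma)\,\bvec{X}^\sigma$, whose mass equals $\sum_{\sigma \in \N^k} \mu(\sigma) \leq 1$. Conversely, any $F \in \ps$ with $\mass{F} \leq 1$ satisfies $0 \leq \extract{\sigma}{F} \leq \mass{F} \leq 1$ for every $\sigma \in \N^k$, hence $F \ll \infty$ and $\sigma \mapsto \extract{\sigma}{F}$ is a subdistribution whose PGF is (the generating function of) $F$. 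Thus $\pgf$ can be characterized as $\{\, F \in \ps \mid \mass{F} \leq 1 \,\}$.

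Given this characterization, the argument is immediate: let $P \in \pgcl$ and $G \in \pgf$. Then $\mass{G} \leq 1$, and by \Cref{thm:mass} we get $\mass{\semapp{P}{G}} \leq \mass{G} \leq 1$. Applying the characterization in the other direction yields $\semapp{P}{G} \in \pgf$. I would also point out that the finiteness of all coefficients of $\semapp{P}{G}$ — needed for its generating function to be a well-defined (partial) function in the sense of \Cref{def:generating-functions} — is not a separate obligation but follows from $\mass{\semapp{P}{G}} \leq 1 < \infty$, since each coefficient is bounded by the total mass.

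There is essentially no obstacle here. The only step worth stating carefully is the bookkeeping of the first paragraph, namely that mass at most one is not merely necessary but also \emph{sufficient} for an FPS to represent a PGF; this is what allows Mass Conservation to be \enquote{fed back} to conclude membership in $\pgf$ rather than just a bound on the mass. Everything else is a one-line consequence of \Cref{thm:mass} (which in turn rests on \cite{Kozen79}).
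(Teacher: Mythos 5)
Your proof is correct and follows exactly the route the paper intends: the corollary is stated as an immediate consequence of Mass Conservation (\Cref{thm:mass}), and your identification of $\pgf$ with the FPSs of mass at most one is precisely the (unstated) bookkeeping that makes that deduction go through. Nothing is missing.
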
%
\noindent{}%
Restricted to $\pgf$, our semantics hence acts as a subdistribution transformer. 
Output masses may be smaller than input masses. 
The probability of non-termination of the programs is captured by the \enquote{missing} probability mass.

As observed in~\cite{Kozen79}, semantics of probabilistic programs are fully defined by their effects on point masses, thus rendering probabilistic program semantics linear.
In our setting, this generalizes to linearity of our FPS transformers.%
\begin{definition}[Linearity]
	Let $F, G \in \ps$ and $r \in \PosRealsInf$ be a scalar. 
	The function $\psi \colon \ps \to \ps$ is called a \emph{linear transformer} (or simply \emph{linear}), if
	\[
		\psi(r \cdot F + G) \eeq r \cdot \psi(F) \pplus \psi(G)~.
	\]%
\end{definition}%
\begin{restatable}[Linearity of \pgcl Semantics]{theorem}{linearity}\label{lem:lin_funcs} 
	For every program $P$ and guard~$B$, the functions $\restrict{\:\cdot\:}{B}$ and $\sem{P}$ are linear. 
	Moreover, the unfolding operator $\Phi_{B,P}$ maps linear transformers onto linear transformers.
\end{restatable}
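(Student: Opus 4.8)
The plan is to prove linearity of $\restrict{\:\cdot\:}{B}$ and $\sem{P}$ by structural induction on $P$, and then handle the $\Phi_{B,P}$ claim separately as a direct consequence. First I would dispatch the restriction operator: since $\restrict{F}{B} = \sum_{\sigma \in B} \extract{\sigma}{F}\cdot\bvec{X}^\sigma$ and both coefficient extraction $\extract{\sigma}{\:\cdot\:}$ and the pointwise vector-space operations on $\ps$ are defined coefficient-wise, we have $\extract{\sigma}{r\cdot F + G} = r\cdot\extract{\sigma}{F} + \extract{\sigma}{G}$ for every $\sigma$, and restricting to the index set $B$ preserves this. This is a one-line computation.

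For $\sem{P}$ I would go through the cases of \Cref{tab:semantics}. The base cases $\pskip$ (identity, trivially linear) and the assignment $\ASSIGN{\codify{x}_i}{E}$ are immediate from the coefficient-wise definition: the assignment merely relocates coefficients according to $\sigma \mapsto (\sigma_1,\ldots,\eval{E}{\sigma},\ldots,\sigma_k)$, and summing relocated coefficients commutes with scaling and addition. For the inductive cases: $\PCHOICE{P_1}{p}{P_2}$ is a nonnegative linear combination of $\sem{P_1}$ and $\sem{P_2}$, hence linear since linear transformers are closed under such combinations; $\compose{P_1}{P_2}$ is the composition $\sem{P_2}\circ\sem{P_1}$, and composition of linear maps is linear; $\ITE{B}{P_1}{P_2}$ equals $F \mapsto \sem{P_1}(\restrict{F}{B}) + \sem{P_2}(\restrict{F}{\neg B})$, which is a sum of compositions of linear maps (using linearity of restriction from the first step together with the induction hypothesis), hence linear.

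The loop case $\WHILEDO{B}{P}$ is the one requiring actual thought. By \Cref{thm:well_definedness}, $\sem{\WHILEDO{B}{P}} = \sup_{n\in\N}\Phi_{B,P}^n(\boldsymbol{0})$. I would argue that each iterate $\Phi_{B,P}^n(\boldsymbol{0})$ is linear: $\boldsymbol{0}$ is linear, and the final claim of the theorem — that $\Phi_{B,P}$ maps linear transformers to linear transformers — gives the inductive step, so by induction all iterates are linear. It then remains to show that the supremum of an $\omega$-chain of linear transformers (the iterates form a chain by monotonicity of $\Phi_{B,P}$) is again linear. Here one uses $\omega$-continuity/Scott-continuity of the relevant operations: $\bigl(\sup_n \psi_n\bigr)(r\cdot F + G) = \sup_n \psi_n(r\cdot F + G) = \sup_n\bigl(r\cdot\psi_n(F) + \psi_n(G)\bigr) = r\cdot\sup_n\psi_n(F) + \sup_n\psi_n(G)$, where the last step needs that scalar multiplication by $r$ and coefficient-wise addition both commute with suprema of $\omega$-chains in $(\ps,\preceq)$ — which holds because suprema are computed coefficient-wise in the extended reals and these operations are continuous there.

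The remaining piece, which I expect to be the \emph{main obstacle} in terms of care required, is the explicit claim that $\Phi_{B,P}$ preserves linearity: assuming $\psi$ is linear, show $\Phi_{B,P}(\psi) = \lambda F\boldsymbol{.}\,\restrict{F}{\neg B} + \psi(\sem{P}(\restrict{F}{B}))$ is linear. This unwinds to checking that $F \mapsto \restrict{F}{\neg B}$ is linear (first step), $F\mapsto \sem{P}(\restrict{F}{B})$ is linear (composition of the structural-induction result for $\sem{P}$ with restriction), $\psi$ composed with the latter is linear (composition), and finally that the pointwise sum of two linear transformers is linear (coefficient-wise). The only subtlety is keeping the scalar $r$ in the right place through each composition and making sure the $+$ in $\restrict{F}{\neg B} + \psi(\cdots)$ distributes correctly — routine, but it is where an off-by-a-scalar slip would hide. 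Note this sub-claim does not itself depend on the loop case, so there is no circularity: $\sem{P}$ for the loop body $P$ is available from the structural induction, and the loop semantics only consumes the $\Phi_{B,P}$ statement, not the other way around.
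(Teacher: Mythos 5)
Your proposal is correct and follows essentially the same route as the paper: a one-line coefficient-wise argument for $\restrict{\:\cdot\:}{B}$, structural induction on $P$ for $\sem{P}$ with the loop case handled by showing each Kleene iterate $\Phi_{B,P}^n(\boldsymbol{0})$ is linear and that suprema of $\omega$-chains commute with scaling and addition (the paper invokes the monotone sequence theorem coefficient-wise for exactly this step), and a separate unwinding of $\Phi_{B,P}(\psi)$ into restrictions, compositions, and sums for the preservation claim. Your explicit remark that the preservation claim for the loop body only needs the structural induction hypothesis, so there is no circularity, matches the implicit organization of the paper's proof.
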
%
\noindent{}As a final remark, we can unroll \while loops:
\begin{restatable}[Loop Unrolling]{lemma}{unrolling}\label{lem:loop_unrolling}
	For any 
	FPS $F$,
	\[
		\semapp{\WHILEDO{B}{P}}{F} \eeq \restrict{F}{\neg B} \pplus \semapp{\WHILEDO{B}{P}}{\sem{P}\big(\restrict{F}{B}\big)}~.
	\]%
\end{restatable}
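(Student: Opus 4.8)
The plan is to use nothing more than the single defining feature of the loop semantics: by \cref{thm:well_definedness} the least fixed point $\lfp\,\phifct{B}{P}$ exists, and being the \emph{least fixed point} it is in particular \emph{a} fixed point of the unfolding operator. Hence $\sem{\WHILEDO{B}{P}} = \phifct{B}{P}\bigl(\sem{\WHILEDO{B}{P}}\bigr)$ as transformers in $\ps \to \ps$. No continuity, linearity, or mass-conservation facts are needed; the identity is purely a one-step unrolling of this fixed-point equation.

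Concretely, I would instantiate the fixed-point equation at an arbitrary FPS $F$ and expand the definition of $\phifct{B}{P}$ from \cref{def:semantics}:
\[
\semapp{\WHILEDO{B}{P}}{F} \eeq \Bigl(\phifct{B}{P}\bigl(\sem{\WHILEDO{B}{P}}\bigr)\Bigr)(F) \eeq \restrict{F}{\neg B} \pplus \sem{\WHILEDO{B}{P}}\Bigl( \sem{P}\bigl(\restrict{F}{B}\bigr) \Bigr)~.
\]
Rewriting the outermost term $\sem{\WHILEDO{B}{P}}\bigl(\sem{P}(\restrict{F}{B})\bigr)$ as $\semapp{\WHILEDO{B}{P}}{\sem{P}(\restrict{F}{B})}$ yields exactly the claimed equation, and since $F$ was arbitrary we are done.

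I do not expect a genuine obstacle here: the only thing that has to be in place is that $\phifct{B}{P}$ really does have a (least) fixed point, which is precisely the content of \cref{thm:well_definedness} (via the Kleene Fixed Point Theorem, using $\omega$-continuity from \cref{thm:cont}). One could alternatively argue through the Kleene iterates $\phifct{B}{P}^{n}(\boldsymbol{0})$ and pass to the supremum, but that route is strictly more work and buys nothing, so I would present only the direct fixed-point argument above.
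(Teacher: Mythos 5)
Your proposal is correct and matches the paper's own proof: both invoke the defining property $\lfp\,\Phi_{B,P} = \Phi_{B,P}(\lfp\,\Phi_{B,P})$ and then expand the definition of the unfolding operator at the given FPS. The paper merely adds an intermediate step identifying the result with $\sem{\ITE{B}{\compose{P}{W}}{\pskip}}$, which is cosmetic.
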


\subsection{Embedding into Kozen's Semantics Framework}\label{ssec:kozen_relation}
Kozen~\cite{Kozen79} defines a generic way of giving distribution transformer semantics based on an abstract measurable space $(X^n, M^{(n)})$.
Our FPS semantics instantiates his generic semantics.
The state space we consider is $\N^k$, so that $(\N^k, \mathcal{P}(\N^k))$ is our measurable space.%
\footnote{We note that we want each point $\sigma$ to be measurable, which enforces a \emph{discrete} measurable space.}
A measure on that space is a countably-additive function $\mu \colon \mathcal{P}(\N^k) \to [0, \infty]$ with $\mu(\emptyset) = 0$.
We denote the set of all measures on our space by~$\mathcal{M}$.
Although, we represent measures by FPSs, the two notions are in bijective correspondence $\tau \colon \ps \to \mathcal{M}$, given by
\[
  \tau(F) \eeq \lambda S \mathpunct{.} \sum_{\sigma  \in S} [\sigma]_F ~.
\]
This map preserves the linear structure and the order $\preceq$.

Kozen's syntax~\cite{Kozen79} is slightly different from \pgcl. 
We compensate for this by a translation function $\mathfrak{T}$, which maps \pgcl programs to Kozen's.
The following theorem shows that our semantics agrees with Kozen's semantics.\footnote{Note that Kozen regards a program $P$ itself as a function $P \colon \mathcal{M} \to \mathcal{M}$.}%
\begin{restatable}{theorem}{kozenrel}\label{thm:rel_to_kozen} The \ps semantics of \pgcl 
is an instance of Kozen's semantics, i.e.~for all \pgcl programs $P$, we have%
\[\tau \circ \sem{P} = \mathfrak{T}(P) \circ \tau~.\]%
\end{restatable}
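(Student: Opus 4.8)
The plan is to prove the identity $\tau \circ \sem{P} = \mathfrak{T}(P) \circ \tau$ by structural induction on the \pgcl program $P$. Since $\tau$ is a bijection preserving the linear structure and the order $\preceq$ (and hence, by \Cref{lem:completeness_of_pos} and the point-wise lifting, suprema of $\omega$-chains), it suffices to check that the building blocks of our semantics match Kozen's under this translation, and that the inductive cases go through. The atomic cases — \pskip, assignments $\codify{x}_i \coloneqq E$, and probabilistic choice — are handled by unfolding both sides on an arbitrary measurable set $S$ and comparing. For \pskip this is immediate since $\tau$ is the identity-compatible bijection; for $\codify{x}_i \coloneqq E$ one checks that the coefficient shift $X_i^{\sigma_i} \mapsto X_i^{\eval{E}{\sigma}}$ in \Cref{tab:semantics} corresponds, under $\tau$, exactly to the pushforward measure along the state-update map $\sigma \mapsto \sigma[\codify{x}_i/\eval{E}{\sigma}]$, which is how Kozen interprets assignments; for $\PCHOICE{P_1}{p}{P_2}$ one uses that $\tau$ preserves scaling and addition together with the induction hypotheses for $P_1$ and $P_2$. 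For sequential composition $\compose{P_1}{P_2}$ one simply composes the two induction hypotheses: $\tau \circ \sem{P_2} \circ \sem{P_1} = \mathfrak{T}(P_2) \circ \tau \circ \sem{P_1} = \mathfrak{T}(P_2) \circ \mathfrak{T}(P_1) \circ \tau$.

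For the $\pif$-conditional, the key observation is that $\tau$ maps the restriction $\restrict{F}{B}$ to the restriction of the measure $\tau(F)$ to the set $B$, i.e.\ $\tau(\restrict{F}{B}) = \lambda S\mathpunct{.}\,\tau(F)(S \cap B)$, which is precisely how Kozen splits a measure along a guard; combining this with additivity of $\tau$ and the induction hypotheses for the two branches yields the conditional case. The genuinely delicate case is the \while loop. Here $\sem{\WHILEDO{B}{P}} = \lfp\, \Phi_{B,P} = \sup_{n} \Phi_{B,P}^n(\boldsymbol{0})$ by \Cref{thm:well_definedness}, and Kozen's semantics of the corresponding loop is likewise a least fixed point, obtained as the supremum of the finite unfoldings of his own loop functional. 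The plan is: (i) show that $\tau$ commutes with one application of the unfolding operator, i.e.\ that for any linear transformer $\psi$ the Kozen-side unfolding of $\tau \circ \psi \circ \tau^{-1}$ equals $\tau \circ \Phi_{B,P}(\psi) \circ \tau^{-1}$ — this uses the conditional/restriction compatibility just established plus the induction hypothesis for the loop body $P$; (ii) conclude by induction on $n$ that $\tau \circ \Phi_{B,P}^n(\boldsymbol{0}) \circ \tau^{-1}$ equals Kozen's $n$-th finite unfolding, noting that $\tau$ sends the bottom transformer $\boldsymbol{0}$ to Kozen's zero measure transformer; (iii) take the supremum over $n$, using that $\tau$ preserves suprema of $\omega$-chains and that $\omega$-continuity (\Cref{thm:cont}) together with the Kleene Fixed Point Theorem guarantees both least fixed points are actually reached as these suprema.

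The main obstacle is bookkeeping around the translation function $\mathfrak{T}$: Kozen's syntax differs from \pgcl (e.g.\ in how guards, conditionals, and loops are presented), so each inductive case really requires first spelling out what $\mathfrak{T}(P)$ is and what Kozen's semantic clause for that construct says, and only then checking the diagram commutes. The loop case compounds this because one must match our $\Phi_{B,P}$ against Kozen's loop functional step by step; the conceptual content is small once the restriction-compatibility lemma $\tau(\restrict{F}{B}) = \tau(F)(\,\cdot \cap B)$ and order-preservation of $\tau$ are in hand, but the details are somewhat laborious. I would isolate that restriction-compatibility fact and the "$\tau$ preserves $\omega$-suprema" fact as small lemmas up front, so that the induction itself reads cleanly.
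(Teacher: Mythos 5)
Your proposal is correct and follows the paper's overall skeleton --- structural induction, the restriction-compatibility identity $\tau \circ \constrain{\cdot}{B} = \mathrm{e}_B \circ \tau$ isolated as a lemma, and the one-step commutation of the two unfolding operators via the induction hypothesis on the loop body --- but it resolves the final step of the loop case differently. You match the least fixed points by comparing Kleene iterates, showing $\tau \circ \Phi_{B,P}^n(\boldsymbol{0}) \circ \tau^{-1}$ equals Kozen's $n$-th finite unfolding and then passing to the supremum; this additionally requires that Kozen's loop semantics is itself attained as the supremum of its finite unfoldings (continuity on the measure side), on top of \Cref{thm:cont}. The paper instead proves a small general lemma (\Cref{lem:helper}): for a monotone isomorphism $\tau$ between complete lattices, $\tau(\lfp \varphi) = \lfp \tau^{\ast}(\varphi)$, established via the Knaster--Tarski characterization $\lfp \varphi = \inf\{p \mid \varphi(p) = p\}$; combined with the identity $\tau^{-\ast} \circ T_{B,P} \circ \tau^{\ast} = \Phi_{B,P}$ this closes the loop case using only monotonicity and bijectivity of $\tau$, with no appeal to continuity of Kozen's functional. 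Your route is equally valid (Kozen's while semantics is indeed the supremum of the finite unfoldings), and is more concrete, but the paper's lemma buys a shorter argument with weaker hypotheses. One minor remark: you restrict step (i) to linear transformers $\psi$, which is unnecessary --- the one-step commutation holds for arbitrary $\psi$ --- though harmless here, since $\boldsymbol{0}$ is linear and $\Phi_{B,P}$ preserves linearity by \Cref{lem:lin_funcs}.
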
%
\noindent{}%
Equivalently, the following diagram commutes:
	\begin{center}
		\begin{tikzcd}
		\ps \arrow["\sem{P}", swap, dd] \arrow[rr, "\tau"] &&  \mathcal{M} \ar[dd, "\mathfrak{T}(P)"]\\
		&&\\
		\ps \arrow["\tau", swap, rr] && \mathcal{M}
		\end{tikzcd}
	\end{center}
For more details about the connection between FPSs and measures, as well as more information about the actual translation, see \cref{app:kozen}.

\section{Analysis of Probabilistic Programs}\label{sec:overapprox}
Our PGF semantics enables the representation of the effect of a \pgcl program on a given PGF. 
As a next step, we investigate to what extent a program analysis can exploit such PGF representations. 
To that end, we consider the overapproximation with loop invariants (\cref{ssec:overapprox}) and provide examples showing that checking whether an FPS transformer overapproximates a loop can be checked with computer algebra tools. 
In addition, we determine a subclass of \pgcl programs whose effect on an arbitrary input state is ensured to be a rational PGF encoding a phase-type distribution (\cref{ssec:rational_pgf}).

\subsection{Invariant-style Overapproximation of Loops}\label{ssec:overapprox}
	In this section, we seek to overapproximate loop semantics, i.e.\ for a given loop $W = \WHILEDO{B}{P}$, we want to find a (preferably simple) FPS transformer $\psi$, such that $\sem{W} \sqsubseteq \psi$, meaning that for any input~$G$, we have $\semapp{W}{G} \preceq \psi(G)$ (cf.~\Cref{def:semantics}).
	Notably, even if $G$ is a PGF, we do not require $\psi(G)$ to be one. 
	Instead, $\psi(G)$ can have a mass larger than one. This is fine, because it still overapproximates the actual semantics coefficient-wise.
	Such overapproximations immediately carry over to reading off expected values (cf.~\Cref{sec:gf}), for instance%
 	\[
 		\tfrac{\partial}{\partial X}\semapp{W}{G}(\vec{1})
 		\quad\lleq\quad
 		\tfrac{\partial}{\partial X}\psi(G)(\vec{1})~.
 	\]%
	We use invariant-style reasoning for verifying that a \emph{given} $\psi$ overapproximates the semantics of~$\sem{W}$.
	For that, we introduce the notion of a \emph{superinvariant} and employ Park's Lemma, a well-known concept from fixed point theory, to obtain a conceptually simple proof rule for verifying overapproximations of while loops.%
	\begin{restatable}[Superinvariants and Loop Overapproximations]{theorem}{overapprox}\label{thm:overapprox}
		Let $\Phi_{B,P}$ be the unfolding operator of $\WHILEDO{B}{P}$ (cf.~\textnormal{Def.~\ref{def:semantics}}) and $\psi\colon \ps \to \ps$. Then
		\[
			\Phi_{B,P}(\psi) \ssqsubseteq \psi \qimplies \sem{\WHILEDO{B}{P}} \ssqsubseteq \psi~.
		\]
	\end{restatable}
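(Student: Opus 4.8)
The statement is a direct instance of Park's Lemma (the co-induction principle for least fixed points in a complete lattice), so the plan is to verify the hypotheses of that lemma and then invoke it. The complete lattice in question is $(\ps \to \ps,\, {\sqsubseteq})$, the point-wise lifting of $(\ps,\, {\preceq})$; by \Cref{lem:completeness_of_pos} the base order is a complete lattice, and point-wise liftings of complete lattices are again complete lattices, so this is unproblematic. Park's Lemma states: if $\Phi$ is a monotone self-map on a complete lattice and $\Phi(\psi) \sqsubseteq \psi$, then $\lfp\,\Phi \sqsubseteq \psi$. Since $\sem{\WHILEDO{B}{P}} = \lfp\,\Phi_{B,P}$ by \Cref{def:semantics}, the conclusion follows immediately once the two preconditions --- completeness of the lattice and monotonicity of $\Phi_{B,P}$ --- are in place.

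\textbf{Key steps, in order.} First I would recall Park's Lemma explicitly as the abstract fixed-point fact to be applied. Second, I would note that $(\ps \to \ps,\, {\sqsubseteq})$ is a complete lattice, citing \Cref{lem:completeness_of_pos} and the standard fact that point-wise orderings inherit completeness. Third --- the only step requiring actual work --- I would establish that the unfolding operator $\Phi_{B,P}$ is monotone with respect to $\sqsubseteq$: given $\psi_1 \sqsubseteq \psi_2$, I must show $\Phi_{B,P}(\psi_1) \sqsubseteq \Phi_{B,P}(\psi_2)$, i.e. for every $F \in \ps$,
\[
  \langle F\rangle_{\neg B} \pplus \psi_1\!\bigl(\sem{P}(\langle F \rangle_B)\bigr)
  \quad\preceq\quad
  \langle F\rangle_{\neg B} \pplus \psi_2\!\bigl(\sem{P}(\langle F \rangle_B)\bigr).
\]
This reduces to $\psi_1(H) \preceq \psi_2(H)$ for $H = \sem{P}(\langle F\rangle_B)$, which is exactly the assumption $\psi_1 \sqsubseteq \psi_2$ instantiated at $H$, together with monotonicity of adding the fixed series $\langle F\rangle_{\neg B}$ on the left (addition in $\ps$ is coefficient-wise, hence monotone in $\preceq$). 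Fourth, with monotonicity of $\Phi_{B,P}$ in hand and the hypothesis $\Phi_{B,P}(\psi) \sqsubseteq \psi$, Park's Lemma yields $\lfp\,\Phi_{B,P} \sqsubseteq \psi$, and unfolding the definition of $\sem{\WHILEDO{B}{P}}$ finishes the proof.

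\textbf{Main obstacle.} There is no deep obstacle here: the result is essentially a packaging of Park's Lemma, and the slight subtlety --- and the only place a careful reader might want detail --- is the monotonicity of $\Phi_{B,P}$. Even that is routine, since $\Phi_{B,P}(\psi)(F)$ is built from $\psi$ only by pre-composing with the fixed map $F \mapsto \sem{P}(\langle F\rangle_B)$ and post-composing with the fixed translation $G \mapsto \langle F\rangle_{\neg B} \pplus G$, both of which are order-preserving; composition of monotone maps is monotone. One could alternatively observe that $\Phi_{B,P}$ maps linear transformers to linear transformers and is $\omega$-continuous (which follows from \Cref{thm:cont} and \Cref{thm:well_definedness}), but continuity is stronger than needed --- plain monotonicity suffices for Park's Lemma. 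I would therefore keep the proof short: state Park's Lemma, discharge completeness by citation, spend one or two lines on monotonicity of $\Phi_{B,P}$, and conclude.
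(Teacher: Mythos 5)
Your proposal is correct and takes essentially the same approach as the paper, whose entire proof is a one-line appeal to Park's Lemma; you simply make explicit the two hypotheses the paper leaves implicit (completeness of the lifted lattice $(\ps \to \ps, \sqsubseteq)$ and monotonicity of $\Phi_{B,P}$), and your verification of each is sound.
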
%
\noindent{}%
We call a $\psi$ satisfying $\Phi_{B,P}(\psi) \sqsubseteq \psi$ a \emph{superinvariant}.
	We are interested in linear superinvariants, as our semantics is also linear (cf.~\Cref{lem:lin_funcs}).
	Furthermore, linearity allows to define $\psi$ solely in terms of its effect on monomials, which makes reasoning considerably simpler:

	\begin{restatable}{corollary}{superinvs}\label{cor:linear_superinvs}
		Given a function $f\colon \mon{\bvec{X}} \to \ps$, let the \emph{linear extension $\hat{f}$ of $f$} be defined by
		\[
			\hat{f} \colon\quad \ps \tto \ps, \quad F \mmapsto \sum_{\sigma  \in \N^k} \extract{\sigma}{F} f(\bvec{X}^\sigma)~.
		\]%
		Let $\Phi_{B,P}$ be the unfolding operator of $\WHILEDO{B}{P}$. Then
		\[
			\forall\, \sigma \in \N^k\colon~~ \Phi_{B,P}(\hat{f})(\bvec{X}^\sigma) \ssqsubseteq \hat{f}(\bvec{X}^\sigma) \qqimplies \sem{\WHILEDO{B}{P}} \ssqsubseteq \hat{f}~.
		\]
	\end{restatable}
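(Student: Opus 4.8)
The plan is to derive this corollary from Theorem \ref{thm:overapprox} by showing that the hypothesis, stated only on monomials, already implies the full transformer inequality $\Phi_{B,P}(\hat f) \sqsubseteq \hat f$, at which point Theorem \ref{thm:overapprox} applies directly to $\psi = \hat f$ and yields $\sem{\WHILEDO{B}{P}} \sqsubseteq \hat f$. So the entire content lies in lifting a coefficient-wise/monomial-wise inequality to an inequality between linear transformers.

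First I would record that $\hat f$ is a linear transformer in the sense of Definition (Linearity): by construction $\hat f(r\cdot F + G) = \sum_\sigma [\sigma]_{r F + G} f(\bvec X^\sigma) = \sum_\sigma (r[\sigma]_F + [\sigma]_G) f(\bvec X^\sigma) = r\cdot \hat f(F) + \hat f(G)$, using that addition and scaling of FPSs are coefficient-wise and that the (possibly infinite) sum over $\sigma$ may be reordered since all coefficients live in $\PosRealsInf$. Second, by \Cref{lem:lin_funcs} the unfolding operator $\Phi_{B,P}$ maps linear transformers to linear transformers, so $\Phi_{B,P}(\hat f)$ is again linear. Third — and this is the crux — I would prove the following monotone-lifting fact: if $\psi_1, \psi_2 \colon \ps \to \ps$ are both linear and $\psi_1(\bvec X^\sigma) \preceq \psi_2(\bvec X^\sigma)$ for every $\sigma \in \N^k$, then $\psi_1 \sqsubseteq \psi_2$, i.e. $\psi_1(F) \preceq \psi_2(F)$ for all $F \in \ps$. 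The argument: write $F = \sum_\sigma [\sigma]_F \bvec X^\sigma$. Since $F$ is the supremum of its finite partial sums $F_N = \sum_{|\sigma| \le N} [\sigma]_F \bvec X^\sigma$ along an increasing $\omega$-chain, and since every linear transformer arising here is $\omega$-continuous (it is a composition/pointwise combination of the $\omega$-continuous pieces of \Cref{thm:cont}, or one argues continuity of $\hat f$ and of $\Phi_{B,P}(\hat f)$ directly from the coefficient-wise definition), we get $\psi_j(F) = \sup_N \psi_j(F_N)$. On each finite partial sum, linearity gives $\psi_j(F_N) = \sum_{|\sigma|\le N} [\sigma]_F\, \psi_j(\bvec X^\sigma)$, and the monomial hypothesis together with the fact that scaling by the nonnegative scalar $[\sigma]_F$ and summing both preserve $\preceq$ yields $\psi_1(F_N) \preceq \psi_2(F_N)$. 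Taking suprema (which is monotone with respect to $\preceq$, since $\preceq$-suprema are computed coefficient-wise, cf. \Cref{lem:completeness_of_pos}) gives $\psi_1(F) \preceq \psi_2(F)$.

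Applying this lifting fact with $\psi_1 = \Phi_{B,P}(\hat f)$ and $\psi_2 = \hat f$ — both linear by the above — the hypothesis $\forall\sigma\colon \Phi_{B,P}(\hat f)(\bvec X^\sigma) \preceq \hat f(\bvec X^\sigma)$ upgrades to $\Phi_{B,P}(\hat f) \sqsubseteq \hat f$, so $\hat f$ is a superinvariant, and \Cref{thm:overapprox} delivers $\sem{\WHILEDO{B}{P}} \sqsubseteq \hat f$.

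The main obstacle I expect is the $\omega$-continuity bookkeeping in the lifting fact: one must be careful that $\psi_1 = \Phi_{B,P}(\hat f)$ is genuinely $\omega$-continuous. This is not quite an instance of \Cref{thm:cont}, which is about the semantic functional $\sem{\cdot}$, not about an arbitrary $\Phi_{B,P}(\psi)$; however, unfolding the definition $\Phi_{B,P}(\hat f) = \lambda F.\ \langle F\rangle_{\neg B} + \hat f(\sem{P}(\langle F\rangle_B))$ shows it is built from $\langle\cdot\rangle_{\neg B}$, $\langle\cdot\rangle_B$, $\sem{P}$, $\hat f$, and FPS-addition, each of which is $\omega$-continuous (the restriction operators and addition obviously so from their coefficient-wise definitions, $\sem{P}$ by \Cref{thm:cont}, and $\hat f$ by a direct coefficient-wise check, since sups in $\ps$ are coefficient-wise). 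A clean alternative that sidesteps continuity altogether: argue the lifting fact purely coefficient-wise — for a fixed target state $\tau$, linearity of $\psi_j$ forces $[\tau]_{\psi_j(F)} = \sum_\sigma [\sigma]_F\, [\tau]_{\psi_j(\bvec X^\sigma)}$ as an identity in $\PosRealsInf$ (this is exactly what linearity plus coefficient-wise sups give us), and then $[\tau]_{\psi_1(F)} \le [\tau]_{\psi_2(F)}$ is immediate from the termwise inequality $[\tau]_{\psi_1(\bvec X^\sigma)} \ge [\tau]_{\psi_2(\bvec X^\sigma)}$ (recall $\preceq$ reverses coefficient order). I would adopt whichever of these two routes the paper's earlier machinery makes most economical.
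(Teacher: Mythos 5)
Your proposal is correct and follows essentially the same route as the paper: establish that $\hat f$ is linear, use \Cref{lem:lin_funcs} to get linearity of $\Phi_{B,P}(\hat f)$, expand $\Phi_{B,P}(\hat f)(G)$ as $\sum_{\sigma}\extract{\sigma}{G}\,\Phi_{B,P}(\hat f)(\bvec{X}^\sigma)$, apply the monomial hypothesis termwise to obtain $\Phi_{B,P}(\hat f)\sqsubseteq\hat f$, and conclude via \Cref{thm:overapprox}. The extra care you take over $\omega$-continuity and countable sums is bookkeeping the paper absorbs into its proof of \Cref{lem:lin_funcs} and leaves implicit here.
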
%

\noindent{}%
We call an $f$ satisfying the premise of the above corollary a \emph{superinvariantlet}.
	Notice that superinvariantlets and their extensions agree on monomials, i.e.\ $f(\bvec{X}^\sigma) = \hat{f}(\bvec{X}^\sigma)$.
	Let us examine a few examples for superinvariantlet-reasoning.%
\label{sec:examples_over_approx}%
\begin{example}[Verifying Precise Semantics]\label{ex:approx_geo}
	In %
	\Pref{prog:geometric}, in each iteration, a fair coin flip determines the value of $\codify{x}$. Subsequently, $\codify{c}$ is incremented by $1$.
	Consider the following superinvariantlet:
	\[
	f(X^{i}C^{j}) \eeq C^{j} \cdot 
		\begin{cases}
			\frac{C}{2-C}, & \text{if}~i = 1;\\
			X^{i},& \text{if}~ i \neq 1.
		\end{cases}
	\]
	To verify that $f$ is indeed a superinvariantlet, we have to show that
	\begin{align*}
		\Phi_{B,P}(\hat{f})(X^iC^j) & \eeq \constrain{X^iC^j}{x \neq 1} + \hat{f}\left(\sem{P}\big(\constrain{X^iC^j}{x=1}\big)\right) \\
		& \overset{!}{\ssqsubseteq} \hat{f} \left( X^i C^j \right)~.
	\end{align*}
	For $i\neq 1$, we get
	\begin{align*}
		\Phi_{B,P}(\hat{f})(X^i C^j) 
		&\eeq \constrain{X^i C^j}{x \neq 1} + \hat{f}(\semapp{P}{0}) \\
		&\eeq X^i C^j \eeq f(X^i C^j) \eeq \hat{f}(X^i C^j)~.
	\end{align*}
	For $i=1$, we get
	\begin{align*}
	\Phi_{B,P}(\hat{f})(X^1C^j) 
	&\eeq  \hat{f} \left(\tfrac{1}{2} X^0 C^{j+1} + \tfrac{1}{2} X^1 C^{j+1} \right) \\
	&\eeq \tfrac{1}{2} f\left( X^0 C^{j+1}\right) + \tfrac{1}{2} f\left( X^1 C^{j+1} \right)
	\tag{by linearity of $\hat{f}$} \\
	&\eeq \tfrac{C^{j+1}}{2-C}
	\eeq f\left(X^1C^j\right) \eeq \hat{f}\left( X^1C^j\right)~.
	\tag{by definition of $f$}
	\end{align*}
	Hence, \Cref{cor:linear_superinvs} yields
	$\sem{W}(X) \ssqsubseteq f \left( X \right) \eeq \tfrac{C}{2-C}$.

	For this example, we can state even more.
	As the program is almost surely terminating, and $\mass{f(X^i C^j)} = 1$ for all $(i,j) \in \N^2$, we conclude that $\hat{f}$ is exactly the semantics of $W$, i.e. $\hat{f} = \sem{W}$.
	\hfill$\triangle$
\end{example}%
\program{geometric}{Geometric distribution generator.}{t}%
\begin{example}[Verifying Proper Overapproximations]\label{ex:rw}
	\Pref{prog:random_walk} models a one dimensional, left-bounded random walk.
	Given an input $(i,j) \in \N^2$, it is evident that this program can only terminate in an even (if $i$ is even) or odd (if $i$ is odd) number of steps. 
	This information can be encoded into the following superinvariantlet:
	\begin{align*}
		f(X^0C^j) &\eeq C^j \quad \text{and}\\
		 f(X^{i+1}C^j) &\eeq C^j \cdot  
		\begin{cases}
			\frac{C}{1-C^2},& \text{if}~ i ~\text{is odd;}\\
			\frac{1}{1-C^2},& \text{if}~ i ~\text{is even.}
		\end{cases}
	\end{align*}
	It is straightforward to verify that $f$ is a \emph{proper} superinvariantlet (proper because $\frac{C}{1-C^2} = C + C^3 + C^5 + \ldots$ is \emph{not} a PGF) and hence $f$ \emph{properly} overapproximates the loop semantics.
	\program{random_walk}{Left-bounded 1-dimensional random walk.}{t}
	Another superinvariantlet for \Pref{prog:random_walk} is given by
	\[
		h(X^{i}C^{j}) \eeq C^{j} \cdot 
		\begin{cases}
			\left(\frac{1 - \sqrt{1 - C^2} }{C}\right)^{i},& \text{if}~ i \ge 1;\\
			1,& \text{if}~ i = 0.
		\end{cases}
	\]
	Given that the program terminates almost-surely~\cite{Hurd02} and that $h$ is a superinvariantlet yielding only PGFs, it follows that the extension of $h$ is \emph{exactly} the semantics of \Pref{prog:random_walk}. An alternative derivation of this formula for the case $h(X)$ can be found, e.g., in~\cite{Icard}.
	
	For both $f$ and~$h$, we were able to prove that they are indeed superinvariantlets \emph{automatically}, using the computer algebra library \textsf{SymPy}~\cite{SymPy}. The code is included in Appendix \ref{app:over_approx} (\Pref{prog:sympy}).\hfill$\triangle$
\end{example}%
\begin{example}[Proving Non-almost-sure Termination]\label{ex:non_ast}
	In \Pref{prog:non-termination}, 
	the branching probability of the choice statement depends on the value of a program variable. This notation is just syntactic sugar, as this behavior can be mimicked by loop constructs together with coin flips~\cite[pp. 115f]{Batz18}.
	
	To prove that \Pref{prog:non-termination} does \emph{not} terminate almost-surely, we consider the following superinvariantlet:
	\[
	 	f(X^i) \eeq 1-\frac{1}{e}\cdot\sum_{n=0}^{i-2}\frac{1}{n!}~, \qquad \text{where } e = 2.71828 \ldots \text{ is Euler's number.}
	\]
	Again, the superinvariantlet property was \emph{verified automatically}, here using \textsf{Mathematica}~\cite{Mathematica}.
	Now, consider for instance \mbox{$f(X^3) = 1 - \frac{1}{e}\cdot \left(\frac{1}{0!} + \frac{1}{1!}\right) = 1 - \frac{2}{e} < 1$}.
	This proves, that the program terminates on $X^3$ with a probability strictly smaller than 1, witnessing that the program is not almost surely terminating. \hfill$\triangle$%
	\program{non-termination}{A non-almost-surely terminating loop.}{t}
\end{example}%

\subsection{Rational PGFs}\label{ssec:rational_pgf}
In several of the examples from the previous sections, we considered PGFs which were \emph{rational functions}, that is, fractions of two polynomials. Since those are a particularly simple class of PGFs, it is natural to ask which programs have rational semantics.
In this section, we present a semantic characterization of a class of \while-loops whose output distribution is a (multivariate) \emph{discrete phase-type} dsitribution~\cite{phasetype_phd,Neuts81}. This implies that the resulting PGF of such programs is an effectively computable rational function for any given input state. Let us illustrate this by an example.

\begin{example}[Dueling Cowboys]
	\label{ex:dueling_cowboys} \Pref{prog:dueling_cowboys} models two dueling cowboys~\cite{DBLP:series/mcs/McIverM05}.
	\program{dueling_cowboys}{Dueling cowboys.}{t}
	The hit chance of the first cowboy is $a$ percent and the hit chance of the second cowboy is $b$ percent, where $a,b \in [0,1]$.\footnote{These are \emph{not} program variables.}
	The cowboys shoot at each other in turns, as indicated by the variable $\codify{t}$, until one of them gets hit ($\codify{x}$ is set to $\codify{1}$). The variable $\codify{c}$ counts the number of shots of the first cowboy and $\codify{d}$ those of the second cowboy.
	
	We observe that \Pref{prog:dueling_cowboys} is somewhat independent of the value of $\codify{c}$, in the sense that moving the statement $\codify{c := c + 1}$ to either immediately before or after the loop, yields an equivalent program. In our notation, this is expressed as $\sem{W}(C \cdot H) = C \cdot \sem{W}(H)$ for all PGFs $H$. By symmetry, the same applies to variable $\codify{d}$. Unfolding the loop once on input $1$, yields
	\[
		\sem{W}(1) \eeq (1-a)C \cdot \sem{W}(T) + aCX~.
	\]
	A similar equation for $\sem{W}(T)$ involving $\sem{W}(1)$ on its right-hand side holds.
	This way we obtain a system of two linear equations, although the program itself is infinite-state.
	The linear equation system has a unique solution $\sem{W}(1)$ in the field of rational functions over the variables $C,D,T$, and $X$ which is the PGF
	\[
		G \ccoloneqq \frac{aCX + (1-a)bCDTX}{1-(1-b)(1-a)CD}~.
	\]
	From $G$ we can easily read off the following: The probability that the first cowboy wins ($\codify{x = 1}$ and $\codify{t = 0}$) equals $\frac{a}{1 - (1-a)(1-b)}$, and the expected total number of shots of the first cowboy is
	$
	\tfrac{\partial}{\partial C} G(1) = \frac{1}{a + b - a b}.
	$
	Notice that this quantity equals $\infty$ if $a$ and $b$ are both zero, i.e.\ if both cowboys have zero hit chance.
	
	If we write $G_\bvec{V}$ for the PGF obtained by substituting all but the variables in $\bvec{V}$ with $1$,
	then we moreover see that $G_C \cdot G_D \neq G_{C,D}$.
	This means that $C$ and $D$ (as random variables) are stochastically dependent.
	\hfill$\triangle$
\end{example}%
\noindent{}%
The distribution encoded in the PGF $\sem{W}(1)$ is a discrete phase-type distribution. Such distributions are defined as follows: A \textit{Markov reward chain} is a Markov chain where each state is augmented with a reward vector in $\N^k$. By definition, a (discrete) distribution on $\mathbb{N}^k$ is of phase-type iff it is the distribution of the total accumulated reward vector until absorption in a Markov reward chain with a single absorbing state and a finite number of transient states. In fact, \Pref{prog:dueling_cowboys} can be described as a Markov reward chain with two states ($X^0 T^0$ and $X^0 T^1$) and 2-dimensional reward vectors corresponding to the ``counters'' $(\codify{c}, \codify{d})$: the reward in state $X^0 T^0$ is $(1,0)$ and $(0,1)$ in the other state.

Each \pgcl program describes a Markov reward chain~\cite{DBLP:journals/pe/GretzKM14}. It is not clear which (non-trivial) syntactical restrictions to impose to guarantee for such chains to be finite. 
In the remainder of this section, we give a characterization of \while-loops that are equivalent to finite Markov reward chains. The idea of our criterion is that each variable has to fall into one of the following two categories:%
\begin{definition}[Homogeneous and Bounded Variables]
	\label{def:bounded_variables}
	Let $P \in \pgcl{}$ be a program, $B$ be a guard and $\codify{x}_i$ be a program variable. Then:%
	\begin{itemize}
		\item $\codify{x}_i$ is called \emph{homogeneous} for $P$ if $\sem{P}(X_i \cdot G) = X_i \cdot \sem{P}(G)$ for all $G \in \pgf$.
		\item $\codify{x}_i$ is called \emph{bounded} by $B$ if the set $\{ \sigma_i \mid \sigma \in B \}$ is finite.
	\end{itemize}%
\end{definition}%
\noindent{}%
Intuitively, homogeneity of $\codify{x}_i$ means that it does not matter whether one increments the variable before or after the execution of $P$. Thus, a homogeneous variable \emph{behaves like an increment-only counter} even if this may not be explicit in the syntax. In Example \ref{ex:dueling_cowboys}, the variables $\codify{c}$ and $\codify{d}$ in \Pref{prog:dueling_cowboys} are homogeneous (for both the loop-body and the loop itself). Moreover, $\codify{x}$ and $\codify{t}$ are clearly bounded by the loop guard. We can now state our characterization.

\begin{definition}[HB Loops]
	\label{def:hom_bounded_loop}
	A loop $\WHILEDO{B}{P}$ is called \emph{homogeneous-bounded (HB)} if for all program states $\sigma \in B$, the PGF $\sem{P}(\bvec{X}^\sigma)$ is a polynomial and for all program variables $\codify{x}$ it \emph{either} holds that
		\begin{itemize}
			\item $\codify{x}$ is homogeneous for $P$ and the guard $B$ is independent of $\codify{x}$, or that
			\item $\codify{x}$ is bounded by the guard $B$.
		\end{itemize}
\end{definition}%
\noindent{}%
In an HB loop, all the possible valuations of the bounded variables satisfying~$B$ span the \emph{finite} transient state space of a Markov reward chain in which the dimension of the reward vectors equals the number of homogeneous variables.
The additional condition that $\sem{P}(\bvec{X}^\sigma)$ is a polynomial ensures that there is only a finite amount of terminal (absorbing) states.
Thus, we have the following:%
\begin{restatable}{proposition}{phasetype}
	Let $W$ be a while-loop. Then $\sem{W}(\bvec{X}^\sigma)$ is the (rational) PGF of a multivariate discrete phase-type distribution if and only if $W$ is equivalent to an HB loop that almost-surely terminates on input $\sigma$.
\end{restatable}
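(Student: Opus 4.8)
I would prove the two implications separately, in both directions passing through a \emph{finite} Markov reward chain as the intermediary object, and reading ``$W$ is equivalent to the HB loop $W'$ on input $\sigma$'' in the weak sense that the two produce the same output on the inputs of interest (which is all a single HB loop can witness). \emph{HB loop $\Rightarrow$ phase-type PGF.} Assume $W$ agrees on $\bvec X^\sigma$ with an HB loop $W' = \WHILEDO{B}{P}$ that terminates almost-surely on $\sigma$; then it suffices to analyse $W'$. Split the program variables into the homogeneous ones $\bvec Y$ and the bounded ones $\bvec Z$ (\Cref{def:hom_bounded_loop}). Since $B$ is independent of $\bvec Y$, a state lies in $B$ iff its $\bvec Z$-part lies in the \emph{finite} set $\mathcal S$ of $\bvec Z$-projections of states in $B$. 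Iterating homogeneity and using linearity of $\sem{P}$ and $\sem{W'}$ (\Cref{lem:lin_funcs}) gives $\sem{P}(\bvec Y^\eta\bvec Z^\zeta) = \bvec Y^\eta\cdot\sem{P}(\bvec Z^\zeta)$ and the analogous identity for $\sem{W'}$ (the loop version following by fixpoint induction from the body version and $B$-independence of $\bvec Y$), so the whole computation factors through the $\bvec Z$-coordinates. By the HB condition each $\sem{P}(\bvec Z^\zeta)$ with $\zeta\in\mathcal S$ is a \emph{polynomial}; grouping its monomials by their $\bvec Z$-part and combining \cref{lem:loop_unrolling} with linearity yields, for the finitely many unknowns $G_\zeta := \sem{W'}(\bvec Z^\zeta)$, a finite linear system $G_\zeta = \sum_{\zeta'\in\mathcal S} c_{\zeta,\zeta'}(\bvec Y)\,G_{\zeta'} + r_\zeta(\bvec X)$ with polynomial coefficients $c_{\zeta,\zeta'},r_\zeta$ read off from the $\sem{P}(\bvec Z^\zeta)$. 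This is precisely the system relating the accumulated-reward PGFs of the states of the finite Markov reward chain whose transient states are $\mathcal S$, whose one-step kernel and reward increments are encoded by the $c_{\zeta,\zeta'}$, and whose (possibly several) terminal $\bvec Z$-valuations are folded into a \emph{single} absorbing state by charging, on the corresponding absorbing transition, a one-shot reward equal to that terminal valuation. Almost-sure termination on $\sigma$ means absorption with probability $1$, so the substochastic matrix $\bigl(c_{\zeta,\zeta'}(\vec 1)\bigr)$ restricted to the part reachable from $\sigma$ is transient; hence $I-\bigl(c_{\zeta,\zeta'}(\bvec X)\bigr)$ has non-vanishing determinant and is invertible over the field of rational functions in $\bvec X$. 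Consequently each $G_\zeta$ is an effectively computable rational function, and so is $\sem{W}(\bvec X^\sigma) = \sem{W'}(\bvec X^\sigma)$ (which is $\bvec X^\sigma$ if $\sigma\notin B$, and a monomial in $\bvec Y$ times the appropriate $G_\zeta$ otherwise); by construction it is the PGF of the accumulated-reward distribution of the chain, i.e.\ of a multivariate discrete phase-type distribution.

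\emph{Phase-type PGF $\Rightarrow$ HB loop.} Conversely, suppose $\sem{W}(\bvec X^\sigma)$ is the PGF of a discrete phase-type distribution $\mu$, realised by a finite Markov reward chain $\mathcal C$ with transient states $Q$, a single absorbing state, transition probabilities $p_{q,q'}$, reward increments $\bvec w_{q,q'}\in\N^k$, and a single initial transient state $q_0$ (an initial distribution can be absorbed into a fresh zero-reward state). I would literally encode $\mathcal C$ as a \pgcl loop $W' = \WHILEDO{\codify{s}\neq 0}{P'}$ using one fresh control variable $\codify{s}$ with finite range $Q\cup\{0\}$ and one increment-only counter per reward coordinate; $P'$ is an $\pif$-ladder on $\codify{s}$ whose $q$-branch performs the probabilistic choice over the outgoing transitions of $q$, each alternative executing $\codify{s}:=q'$ followed by the increments $\bvec w_{q,q'}$, with $\codify{s}:=0$ on absorbing transitions. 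One checks directly that $\codify{s}$ is bounded by the guard, the guard ignores the counters, every counter is homogeneous for $P'$ (it is never read in $P'$, so incrementing commutes with $P'$, cf.~\cref{tab:ops}), and $\sem{P'}(\bvec X^\tau)$ is a polynomial for every $\tau$ (a loop-free body maps monomials to finite sums of monomials); hence $W'$ is an HB loop. Since $\mathcal C$ is absorbed with probability $1$, $W'$ terminates almost-surely on the state $\sigma_0$ with $\codify{s}=q_0$ and all counters $0$, and an induction over the loop unrollings of \Cref{thm:well_definedness} (using \cref{lem:loop_unrolling}) identifies $\sem{W'}(\bvec X^{\sigma_0})$ with the accumulated-reward PGF of $\mathcal C$, which is $\sem{W}(\bvec X^\sigma)$; modulo the matching of initial configurations discussed below, this exhibits $W$ as equivalent on $\sigma$ to the almost-surely terminating HB loop $W'$.

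\emph{Main obstacle.} The technical heart is the first implication: making the ``factor through $\bvec Z$'' step precise enough to isolate the finite linear system, and then arguing that almost-sure termination forces $I-\bigl(c_{\zeta,\zeta'}(\bvec X)\bigr)$ to be invertible over rational functions — equivalently, that the least fixpoint of \Cref{thm:well_definedness} for $W'$ coincides with the rational solution of that system. A second genuinely non-cosmetic point, present in both directions, is the bookkeeping that collapses the several terminal valuations of an HB loop into the \emph{single} absorbing state demanded by the definition of a phase-type distribution, which forces the reward vector to also record, via a one-shot absorbing reward, the final values of the bounded variables. Finally, the exact sense of ``equivalent'' — in particular which input of the constructed loop $W'$ is matched against $\sigma$ — deserves care on the converse side, since $W'$ is naturally started in a dedicated control state rather than literally in $\sigma$. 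The remaining ingredients — well-definedness of $W'$ and the inductive matching of loop unrollings with chain steps — are routine given the $\omega$-continuity and linearity results already established (\Cref{thm:cont}, \Cref{lem:lin_funcs}).
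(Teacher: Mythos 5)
Your proposal is essentially correct and follows the route the paper itself sketches informally in Section~5.2 (bounded variables span the finite transient state space, homogeneous variables become reward counters, the loop-unrolling equations become the linear system of an absorbing Markov reward chain). Note, however, that the paper in fact gives \emph{no} proof of this proposition --- the appendix, despite the claim that all proofs appear there, omits it --- so your write-up is filling a genuine gap rather than reproducing an existing argument. The two issues you flag yourself are the real ones: (i) identifying the least fixed point of \Cref{thm:well_definedness} with the rational solution $(I-C(\bvec{X}))^{-1}r$ requires observing that the lfp is the Neumann series $\sum_n C^n r$ and that almost-sure termination makes $\det(I-C(\vec{1}))\neq 0$, hence $\det(I-C(\bvec{X}))$ is a nonzero polynomial; and (ii) on the converse side the fresh control variable $\codify{s}$ and the collapsing of several terminal valuations into one absorbing state (via your one-shot absorbing reward) must be reconciled with the paper's convention that all programs range over the same fixed $k$ variables and with its definition of phase-type distributions --- the paper's statement is too informal about ``equivalent'' to pin this down, and your reading (equivalence of outputs on input $\sigma$) is the only one under which the ``only if'' direction can hold.
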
%
\noindent{}%
To conclude, we remark that there are various simple \emph{syntactic} conditions for HB loops: For example, if $P$ is loop-free, then $\sem{P}(\bvec{X}^\sigma)$ is always a polynomial. Similarly, if $\codify{x}$ only appears in assignments of the form $\codify{x := x + } k$, $k \geq 0$, then $\codify{x}$ is homogeneous. Such updates of variables are e.g. essential in \emph{constant probability programs}~\cite{DBLP:conf/cade/GieslGH19}. The crucial point is that such conditions are only sufficient but not necessary. Our \emph{semantic} conditions thus capture the essence of phase-type distribution semantics more adequately while still being reasonably simple (albeit --- being non-trivial semantic properties --- undecidable in general).


\section{Conclusion}\label{sec:conclusion}
%
%
We have presented a denotational distribution transformer semantics for probabilistic while-programs where the denotations are generation functions (GFs).
Moreover, we have provided a simple invariant-style technique to prove that a given GF over{\-}approximates the program's semantics and identified a class of (possibly infinite-state) programs whose semantics is a rational GF ecoding a phase-type distribution. 
Directions for future work include the (semi-)automated synthesis of invariants and the development of notions on how precise  overapproximations by invariants actually are.

%
%
%
\bibliographystyle{splncs04}
\bibliography{references}


\appendix
\section{Proofs of \Cref{sec:semantics}}
\subsection{Proofs of \Cref{ssec:semantics}}

\cocpos*
\begin{proof}
	We start by showing that $(\ps, \preceq)$ is a partial order.
	Let $F, G, H \in \ps$, $\sigma \in \N^k$.
	For reflexivity, consider the following:
	\begin{align*}
	& G \ppreceq G \\
	\text{iff} \qquad & \forall \sigma \in \N^k \colon \extract{\sigma}{G} \lleq \extract{\sigma}{G} \\
	\text{iff} \qquad & \textnormal{true}~.
	\end{align*}
	For antisymmetry, consider the following:
	\begin{align*}
	& G \ppreceq H ~ \textnormal{and} ~ H \ppreceq G \\
	\text{implies} \qquad & \forall \sigma \in \N^k \colon \extract{\sigma}{G} \lleq \extract{\sigma}{H} \quad \textnormal{and} \quad \extract{\sigma}{H} \lleq \extract{\sigma}{G} \\
	\text{implies} \qquad & \forall \sigma \in \N^k \colon \extract{\sigma}{G} \eeq \extract{\sigma}{H} \\
	\text{implies} \qquad & \quad G \eeq H~.
\end{align*}
For transitivity, consider the following:
\begin{align*}
      & G \ppreceq H \quad \textnormal{and} \quad H \ppreceq F \\
	\text{implies} \qquad & \forall \sigma \in \N^k \colon \extract{\sigma}{G} \lleq \extract{\sigma}{H} \quad \textnormal{and} \quad \extract{\sigma}{H} \lleq \extract{\sigma}{F} \\
	\text{implies} \qquad & \forall \sigma \in \N^k \colon \extract{\sigma}{G} \lleq \extract{\sigma}{F} \\
	\text{implies} \qquad & \quad G \ppreceq F~.
	\end{align*}
	Next, we show that every set $S \subseteq \ps$ has a supremum
	\[
		\sup S \eeq \sum_{\sigma \in \N^k} \sup_{F \in S}  \extract{\sigma}{F} \bvec{X}^\sigma~
	\]
	 in $\ps$. In particular, notice that $\sup \emptyset = \sum_{\sigma \in \N^k} 0 \cdot \bvec{X}^{\sigma}$.
	 The fact that $\sup S \in \ps$ is trivial since $\sup_{F \in S}  \extract{\sigma}{F} \in \PosRealsInf$ 
	 for every $\sigma \in \N^k$. Furthermore, the fact that $\sup S$ is an upper bound on $S$ is immediate
	 since $\preceq$ is defined coefficient-wise. Finally, $\sup S$ is also the \emph{least} upper bound, since, by 
	 definition of $\preceq$, we have $\extract{\sigma}{\sup S} = \sup_{F \in S} \extract{\sigma}{F}$. 
	
\end{proof}

The following proofs rely on the Monotone Sequence Theorem (MST), which we recall here:
If $(a_n)_{n\in\N}$ is a monotonically
increasing sequence in $\PosRealsInf$, then $\sup_n a_n = \lim_{n \rightarrow \infty} a_n$. In particular, if $(a_n)_{n\in\N}$ and $(b_n)_{n\in\N}$ are monotonically increasing sequences in $\PosRealsInf$, then
\[
  \sup_n a_n + \sup_n b_n 
  \eeq 
  \lim_{n \rightarrow \infty} a_n 
  + \lim_{n \rightarrow \infty} b_n
  \eeq \lim_{n \rightarrow \infty} a_n + b_n
  \eeq \sup_n a_n + b_n~.
\]

\cont*
\begin{proof}
	By induction on the structure of $P$. Let $S= \left\{F_1, F_2,\ldots \right\}$ be an increasing $\omega$-chain in $\ps$.
	First, we consider the base cases. \\ \\
\noindent
\emph{The case $P = \pskip{}$.} We have 
\[
	\semapp{P}{\sup S} \eeq \sup S \eeq \sup_{F \in S} ~ \{ F \} \eeq \sup_{F \in S} ~ \{ \semapp{P}{F} \}~.
\]
\emph{The case $P = x_i \coloneqq E$.}	Let $\sup S = \hat{G} = \sum_{\sigma \in \N^k} \extract{\sigma}{\hat{G}} \cdot X^\sigma$, where
for each $\sigma \in \N^k$ we have
$\extract{\sigma}{\hat{G}} = \sup_{F \in S} \extract{\sigma}{F}$. We calculate

\begin{align*}
   &\semapp{P}{\sup S} \\
   \eeq & \semapp{P}{\hat{G}} \\
   \eeq & \semapp{P}{ \sum_{\sigma \in \N^k} \extract{\sigma}{\hat{G}} \cdot X^\sigma} \\
   \eeq & \semapp{P}{ \sum_{\sigma \in \N^k} \extract{\sigma}{\hat{G}} \cdot X_1^{\sigma_1} \cdots  X_i^{\sigma_i} \cdots X_k^{\sigma_k}} \\
   \eeq &  \semapp{P}{ \sum_{\sigma \in \N^k} \extract{\sigma}{\hat{G}} \cdot X_1^{\sigma_1} \cdots  X_i^{\sigma_i} \cdots X_k^{\sigma_k}} \\
    \eeq & \sum_{\sigma \in \N^k} \extract{\sigma}{\hat{G}} \cdot X_1^{\sigma_1} \cdots  X_i^{\evalstate{\sigma}{E}} \cdots X_k^{\sigma_k} \\
    \eeq & \sum_{\sigma \in \N^k} \left(\sup_{F \in S} \extract{\sigma}{F} \right) \cdot X_1^{\sigma_1} \cdots  X_i^{\evalstate{\sigma}{E}} \cdots X_k^{\sigma_k} 
        \\
    \eeq & \sup_{F \in S} \sum_{\sigma \in \N^k}  \extract{\sigma}{F}  \cdot X_1^{\sigma_1} \cdots  X_i^{\evalstate{\sigma}{E}} \cdots X_k^{\sigma_k}
    \tag{$\sup$ on $\ps$ is defined coefficient--wise} \\
    \eeq & \sup_{F \in S}   \semapp{P}{ \sum_{\sigma \in \N^k} \extract{\sigma}{F} \cdot X_1^{\sigma_1} \cdots  X_i^{\sigma_i} \cdots X_k^{\sigma_k}} \\
    \eeq & \sup_{F \in S}   \semapp{P}{ F} 
\end{align*}
As the induction hypothesis now assume that for some arbitrary, but fixed, 
programs $P_1$, $P_2$ and all increasing $\omega$-chains $S_1, S_2$ in $\ps$ it holds that both
\[
   \semapp{P_1}{\sup S_1} \eeq \sup_{F\in S_1} \semapp{P_1}{F}
   \qquad \textnormal{and} \qquad
      \semapp{P_2}{\sup S_2} \eeq \sup_{F\in S_2} \semapp{P_2}{F}~.
\]

We continue with the induction step. \\ \\
\noindent
\emph{The case $P = \PCHOICE{P_1}{p}{P_2}$}. We have
\begin{align*}
   &\semapp{P}{\sup S}  \\
   \eeq & p \cdot \semapp{P_1}{\sup S} + (1-p) \cdot \semapp{P_2}{\sup S} \\
   \eeq & p \cdot \left( \sup_{F \in S} \semapp{P_1}{F} \right) + 
                 (1-p) \cdot \left( \sup_{F \in S} \semapp{P_2 }{F} \right) 
    \tag{I.H.\ on $P_1$ and $P_2$}\\
   \eeq &  \left( \sup_{F \in S} p \cdot  \semapp{P_1}{F} \right) + 
    \left( \sup_{F \in S} (1-p) \cdot  \semapp{P_2 }{F} \right) 
    \tag{scalar multiplication is defined point--wise}
    \\  
    \eeq&\sup_{F \in S}  \left( p \cdot  \semapp{P_1}{F} + 
    (1-p) \cdot \semapp{P_2 }{F} \right) 
    \tag{apply MST coefficient--wise.} \\
    \eeq&\sup_{F \in S} \semapp{\PCHOICE{P_1}{p}{P_2}}{F} \\
    \eeq&\sup_{F \in S} \semapp{P}{F}~.
\end{align*}
\emph{The case $P = \ITE{B}{P_1}{P_2}$.} We have
\begin{align*}
   \semapp{P}{\sup S} \\
   \eeq & \semapp{P_1}{\restrict{\sup S}{B}} + \semapp{P_2}{\restrict{\sup S}{\neg B}} \\
   \eeq & \semapp{P_1}{\sup_{F \in S}  \big( \restrict{F}{B} \big)} + \semapp{P_2}{\sup_{F \in S} \big( \restrict{F}{\neg B} \big)}
   \tag{restriction defined coefficient--wise} \\
   \eeq &\sup_{F \in S}  \semapp{P_1}{ \restrict{F}{B}} + \sup_{F \in S} \semapp{P_2}{\restrict{F}{\neg B}}
   \tag{I.H.\ on $P_1$ and $P_2$} \\
   \eeq &\sup_{F \in S} \left( \semapp{P_1}{ \restrict{F}{B}} + \semapp{P_2}{\restrict{F}{\neg B}} \right)
   \tag{apply MST coefficient--wise} \\
   \eeq& \sup_{F \in S} \semapp{\ITE{B}{P_1}{P_2}}{F} \\
   \eeq& \sup_{F \in S} \semapp{P}{F}~.
\end{align*}
\emph{The case $P = \while{}(B)\{P_1\}$.}
Recall that for every $G \in \ps$,
\begin{align*}
   \semapp{P}{G} &\eeq\left(\lfp \, \phifct{B}{P_1} \right) (G)  \\
   & \eeq \big( \sup_{n \in N} \phifctpowerapp{B}{P_1}{n}{\mathbf{0}}  \big) (G)~.
\end{align*}
Hence, it suffices to show that 
\[
 \left( \sup_{n \in N} \phifctpowerapp{B}{P_1}{n}{\mathbf{0}}  \right) (\sup S)
 \eeq
 \sup_{F\in S} \left( \big( \sup_{n \in N} \phifctpowerapp{B}{P_1}{n}{\mathbf{0}}  \big) (F)\right) ~.
\]
Assume for the moment that for every $n \in \N$ and all increasing $\omega$-chains $S$ in $\ps$,
\begin{align}\label{eq:phi_n_cont}
  \left(  \phifctpowerapp{B}{P_1}{n}{\mathbf{0}} \right)\left( \sup S \right) 
  \eeq  \sup_{F \in S} \left(  \phifctpowerapp{B}{P_1}{n}{\mathbf{0}} \right) (F) ~.
\end{align}
We then have
\begin{align*}
   & \left( \sup_{n \in \N} \phifctpowerapp{B}{P_1}{n}{\mathbf{0}}  \right) (\sup S)  \\
   \eeq & \sup_{n \in \N} \left( \phifctpowerapp{B}{P_1}{n}{\mathbf{0}}(\sup S)  \right)
   \tag{$\sup$ for $\phifct{B}{P_1}$ is defined point--wise} \\
   \eeq& \sup_{n \in \N} \sup_{F \in S} \left( \phifctpowerapp{B}{P_1}{n}{\mathbf{0}}(F)  \right)
   \tag{Equation \ref{eq:phi_n_cont}} \\
   \eeq &\sup_{F \in S} \sup_{n \in \N}  \left( \phifctpowerapp{B}{P_1}{n}{\mathbf{0}}(F)  \right)
   \tag{swap suprema} \\
    \eeq &\sup_{F \in S} \left(\big( \sup_{n \in \N}  \left( \phifctpowerapp{B}{P_1}{n}{\mathbf{0}} \big) (F)\right)  \right)
   \tag{$\sup$ for $\phifct{B}{P_1}$ is defined point--wise}~,  \\
\end{align*} 
which is what we have to show. It remains to prove Equation~\ref{eq:phi_n_cont} \mbox{by induction on $n$.} \\ \\
\noindent
\emph{Base case $n=0$.} We have
\[
   \left(  \phifctpowerapp{B}{P_1}{0}{\mathbf{0}} \right)\left( \sup S \right)  
   \eeq 
   \sup S
   \eeq 
   \sup_{F \in S} F 
   \eeq 
   \sup_{F \in S}
   \left(  \phifctpowerapp{B}{P_1}{0}{\mathbf{0}} \right)\left( F \right)~.  
\]

\noindent
\emph{Induction step.} We have
\begin{align*}
   &\left(  \phifctpowerapp{B}{P_1}{n+1}{\mathbf{0}} \right)\left( \sup S \right)  \\
   \eeq& \phifctapp{B}{P_1}{\phifctpowerapp{B}{P_1}{n}{\mathbf{0}}}\left( \sup S \right)  \\
   \eeq&
   \restrict{\sup S}{\neg B} 
   + \phifctpowerapp{B}{P_1}{n}{\mathbf{0}} \left( \semapp{P_1}{\restrict{\sup S}{B}} \right)
   \tag{Def.\ of $\phifct{B}{P_1}$} \\
   \eeq & \restrict{\sup S}{\neg B} 
   + \phifctpowerapp{B}{P_1}{n}{\mathbf{0}} \left( \sup_{F \in S} \semapp{P_1}{\restrict{F}{B}} \right)
   \tag{I.H.\ on $P_1$} \\
   \eeq & \restrict{\sup S}{\neg B} 
   + \sup_{F \in S}\phifctpowerapp{B}{P_1}{n}{\mathbf{0}} \left(  \semapp{P_1}{\restrict{F}{B}} \right)
   \tag{I.H.\ on $n$} \\
   	\eeq & \sup_{F \in S} \left( \restrict{F}{\neg B} 
   	+  \phifctpowerapp{B}{P_1}{n}{\mathbf{0}} \left(\semapp{P_1}{\restrict{F}{B}} \right) \right)
   	\tag{apply MST} \\
   	\eeq &\sup_{F \in S}  \left(  \phifctpowerapp{B}{P_1}{n+1}{\mathbf{0}} \right)\left( F \right)
   	\tag{Def.\ of $\phifct{B}{P_1}$}~. \\
\end{align*}
This completes the proof.\qedhere
\end{proof}

\welldef*
\begin{proof}
	First, we show that the unfolding operator $\Phi_{B,P}$ is $\omega$-continuous. For that, let $f_1 \sqsubseteq f_2 \sqsubseteq \ldots$ be an $\omega$-chain in $\ps \to \ps$. Then,
	\begin{align*}
		\Phi_{B,P}\left(\sup_{n\in \N}\{f_n\}\right) &\eeq \lambda G. \left<G\right>_{\neg B} + \left(\sup_{n\in \N}\{f_n\}\right) (\sem{P}(\left<G\right>_B))\\
		&\eeq\lambda G. \left<G\right>_{\neg B} + \sup_{n\in \N}\{f_n(\sem{P}(\left<G\right>_B))\}
		\tag{$\sup$ on $\ps \to \ps$ is defined point--wise}\\
		&\eeq\sup_{n\in \N}\left\{\lambda G. \left<G\right>_{\neg B} + f_n(\sem{P}(\left<G\right>_B))\right\}
		\tag{apply monotone sequence theorem coefficient-wise}\\
		&\eeq \sup_{n\in \N}\left\{\Phi_{B,P}(f_n)\right\}~.
		\tag{Def.\ of $\phifct{B}{P}$}
	\end{align*}
	Since $\Phi_{B,P}$ is $\omega$-continuous and $(\ps \to \ps, \sqsubseteq)$ forms a complete lattice (\Cref{lem:completeness_of_pos}), we get by the Kleene fixed point Theorem~\cite{DBLP:journals/ipl/LassezNS82} that $\Phi_{B,P}$ has  a unique least fixed point given by $\sup_{n \in \N} \Phi_{B,P}^n(\boldsymbol{0})$.
\end{proof}
\masscons*
\begin{proof}
By induction on the structure of $P$. For the loop--free cases, this is straightforward.
For the case $P = \while{}(B)\{P_1\}$, we proceed as follows. For every $r \in \PosRealsInf$, we define
the set 
\[
   \ps_r \eeq \left\{ F \in \ps ~|~ \mass{F} \leq r \right\}
\]
of all FPSs whose mass is at most $r$. 
First, we define the restricted unfolding operator 
\[
    \phifctres{B}{P_1}{r} \colon (\ps_r \to \ps_r) \to (\ps_r \to \ps_r), \quad \psi  \mapsto \phifctapp{B}{P_1}{\psi}~.
\]
Our induction hypothesis on $P_1$ implies that $\phifctres{B}{P_1}{r}$ is well--defined.

It is now only left to show that $\left( \ps_r, \, \preceq \right)$ is an $\omega$-complete partial order, because then $\phifctres{B}{P_1}{r}$ has a least fixed point in $\ps_r$ for every $r \in \PosRealsInf$. 
The theorem then follows by letting $r = \mass{G}$, because
\[
    \left( \lfp \, \phifct{B}{P_1}\right)(G) \eeq \left( \lfp \, \phifctres{B}{P_1}{\mass{G}}\right)(G) 
    \qimplies
    \mass{\left( \lfp \, \phifct{B}{P_1}\right)(G)} \lleq \mass{G}~.
\]
\emph{$\left( \ps_r, \, \preceq \right)$ is an $\omega$-complete partial order.}
The fact that $\left( \ps_r, \, \preceq \right)$ is a partial order is immediate. It remains to show $\omega$-completeness.
For that, let $f_1 \preceq f_2 \preceq \ldots$ be an $\omega$-chain in $\ps_r$.
We have to show that $\sup_n F_n \in \ps_r$, which is the case if and only if%
\[
 \mass{\sup_n f_n} \eeq \sum_{\sigma \in \N^k} \sup_n \extract{\sigma}{f_n} \lleq r ~.
\]
Now let $g \colon \N \to \N^k$ be some bijection from $\N$ to $\N^k$.
We have 
\begin{align*}
   & \sum_{\sigma \in \N ^ k} \sup_n \extract{\sigma}{f_n} \\
   \eeq & \sum_{i = 0}^{\infty} \sup_n \extract{g(i)}{f_n} 
   \tag{series converges absolutely}\\
   \eeq & \sup_N \sum_{i=0}^{N} \sup_n \extract{g(i)}{f_n}
   \tag{rewrite infinite series as supremum of partial sums} \\
   \eeq &\sup_N \sup_n \sum_{i=0}^{N} \extract{g(i)}{f_n}
   \tag{apply monotone sequence theorem} \\
   \eeq &\sup_n \sup_N \sum_{i=0}^{N} \extract{g(i)}{f_n}
   \tag{swap suprema}
\end{align*}
Now observe that $\sup_N \sum_{i=0}^{N} \extract{g(i)}{f_n} = \mass{f_n}$, which is a monotonically increasing sequence in $n$. Moreover, since $f_n \in \ps_r$, this sequence is bounded from above by $r$. Hence, the \emph{least} upper bound $\sup_n \mass{f_n}$ of the sequence $\mass{f_n}$ is no larger than $r$, too.
This completes the proof.
\end{proof}
\subsection{Proofs of \Cref{ssec:sem_props}}
\begin{lemma}[Representation of $\sem{\while}$]
	Let $W = \WHILEDO{B}{P}$ be a \pgcl program. An alternative representation is:
	\[
	\sem{W} = \lambda G.~\sum_{i=0}^{\infty} \constrain{\varphi^i (G)}{\neg B},\quad \textnormal{where}~\varphi(G) = \sem{P}(\constrain{G}{B}).
	\]
\end{lemma}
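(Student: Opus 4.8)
The plan is to compute the finite Kleene approximants of the unfolding operator explicitly and then pass to the limit; intuitively this is just \enquote{unrolling \Cref{lem:loop_unrolling} infinitely often}. By \Cref{thm:well_definedness} we have $\sem{W} = \lfp\,\Phi_{B,P} = \sup_{n\in\N}\Phi_{B,P}^n(\boldsymbol 0)$, and since the order on $\ps \to \ps$ is the point-wise lifting of $\preceq$, it suffices to prove, for every FPS $G$,
\[
  \Phi_{B,P}^n(\boldsymbol 0)(G) \eeq \sum_{i=0}^{n-1} \constrain{\varphi^i(G)}{\neg B} \qquad \text{for all } n \in \N,
\]
where $\varphi(G) = \sem{P}(\constrain{G}{B})$; note that $\varphi \colon \ps \to \ps$ is well defined since both $\constrain{\cdot}{B}$ and $\sem{P}$ map $\ps$ to $\ps$.

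First I would prove this identity by induction on $n$. The base case $n = 0$ is immediate, since $\Phi_{B,P}^0(\boldsymbol 0) = \boldsymbol 0$ sends $G$ to the zero series, which equals the empty sum. For the step, unfolding $\Phi_{B,P}$ once gives
\[
  \Phi_{B,P}^{n+1}(\boldsymbol 0)(G) \eeq \constrain{G}{\neg B} \pplus \Phi_{B,P}^{n}(\boldsymbol 0)\bigl(\sem{P}(\constrain{G}{B})\bigr) \eeq \constrain{G}{\neg B} \pplus \Phi_{B,P}^{n}(\boldsymbol 0)\bigl(\varphi(G)\bigr),
\]
and applying the induction hypothesis to the argument $\varphi(G)$, together with $\varphi^i(\varphi(G)) = \varphi^{i+1}(G)$ and $\constrain{G}{\neg B} = \constrain{\varphi^0(G)}{\neg B}$, yields $\sum_{i=0}^{n}\constrain{\varphi^i(G)}{\neg B}$ after re-indexing the sum.

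Finally I would take the supremum over $n$. The partial sums $s_n \coloneqq \sum_{i=0}^{n-1}\constrain{\varphi^i(G)}{\neg B}$ form an increasing $\omega$-chain in $\ps$, because each summand has non-negative coefficients; hence, applying the Monotone Sequence Theorem coefficient-wise exactly as in the earlier proofs of this section, $\sup_{n}s_n = \sum_{i=0}^{\infty}\constrain{\varphi^i(G)}{\neg B}$. Combining this with $\sem{W}(G) = \bigl(\sup_{n}\Phi_{B,P}^n(\boldsymbol 0)\bigr)(G) = \sup_{n}\Phi_{B,P}^n(\boldsymbol 0)(G) = \sup_{n}s_n$ establishes the claim. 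I do not expect a genuine obstacle here; the only points needing a little care are the re-indexing step in the induction and the (routine) justification that the supremum of the partial sums is the infinite series, which is precisely where the extended-reals coefficient domain and the Monotone Sequence Theorem are used.
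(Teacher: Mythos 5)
Your proposal is correct and follows essentially the same route as the paper's proof: an induction showing $\Phi_{B,P}^n(\boldsymbol 0)(G) = \sum_{i=0}^{n-1}\constrain{\varphi^i(G)}{\neg B}$ via a single unfolding, re-indexing, and then passing to the supremum using the Kleene characterization from \Cref{thm:well_definedness}. Your additional remark that the partial sums form an increasing chain (justifying the identification of the supremum with the infinite series) is a point the paper leaves implicit, but it is the same argument.
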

\begin{proof}
	First we show by induction, that $\Phi_{B,P}^n(\boldsymbol{0})(G) = \sum_{i=0}^{n-1} \constrain{\varphi^i(G)}{\neg B}$. \\ \\
	\noindent
	\emph{Base case.} We have
	\[
		\Phi_{B,P}^0(\boldsymbol{0})(G) = 0 = \sum_{i=0}^{-1}\constrain{\varphi^i(G)}{\neg B}~.
	\]
	\emph{Induction step.} We have
	\begin{align*}
												&\Phi_{B,P}^{n+1}(\boldsymbol{0})(G) = \Phi_{B,P}\left(\Phi_{B,P}^{n}(\boldsymbol{0}) (G)\right)\\
												&\quad =\constrain{G}{\neg B} + \Phi_{B,P}^{n}(\boldsymbol{0})(\sem{P}\constrain{G}{B})\\
												&\quad =  \constrain{G}{\neg B} + \Phi_{B,P}^{n}(\boldsymbol{0})(\varphi(G))\\
												&\quad = \constrain{G}{\neg B} + \sum_{i=0}^{n-1}\constrain{\varphi^{i+1}}{\neg B}\\
												&\quad =\constrain{G}{\neg B} + \sum_{i=1}^{n}\constrain{\varphi^{i}}{\neg B}\\
												&\quad = \sum_{i=0}^{n}\constrain{\varphi^i(G)}{\neg B} ~.
	\end{align*}
	Overall, we thus get 
	\begin{align*}
	&\sem{W}(G) \\
	\eeq& \sup_{n \in \N} \left\lbrace \Phi_{B,P}^{n}(\boldsymbol{0})\right\rbrace(G)\\
	\eeq& \sup_{n \in \N} \left\lbrace \Phi_{B,P}^{n}(\boldsymbol{0})(G)\right\rbrace
	\tag{$\sup$ on $\ps \to \ps$ is defined point--wise}\\
	\eeq& \sup_{n \in \N} \left\lbrace\sum_{i=0}^{n} \constrain{\varphi^i (G)}{\neg B}\right\rbrace
	\tag{see above}\\
	\eeq& \sum_{i=0}^{\infty} \constrain{\varphi^i (G)}{\neg B}\tag*{\qedhere}
	\end{align*}
\end{proof}
\linearity*
\begin{proof} 
\noindent
		\textbf{Linearity of $\constrain{\cdot}{B}$.} We have
			\begin{align*}
			\constrain{a \cdot G + F}{B} &= \constrain{a \cdot \sum_{\sigma \in \N^k} \mu_\sigma X^\sigma + \sum_{\sigma \in \N^k}\nu_\sigma X^\sigma}{B}\\
			&= \constrain{\sum_{\sigma \in \N^k} \left(a \cdot \mu_\sigma + \nu_\sigma \right) X^\sigma}{B}\\
			&= \sum_{\sigma \in B} (a \cdot \mu_\sigma + \nu_\sigma) X^\sigma\\
			&= \sum_{\sigma \in B} a \cdot \mu_\sigma X^\sigma
			 \quad +\quad \sum_{\sigma \in B}
			\nu_\sigma X^\sigma\\
			&= a\cdot \sum_{\sigma \in B}
			\mu_\sigma X^\sigma \quad +\quad  \sum_{\sigma \in B} \nu_\sigma X^\sigma\\
			&= a \cdot \constrain{G}{B} + \constrain{F}{B}\qedhere
			\end{align*}
	\textbf{Linearity of $\sem{P}$}.
		By induction on the structure of $P$. First, we consider the base cases.\\ \\
		\noindent
		\emph{The case $P = \codify{skip}$.} We have
		\[
			 \sem{\codify{skip}}(r \cdot F+ G) \eeq r \cdot F + G \eeq r \cdot \sem{\codify{skip}}(F) + \sem{\codify{skip}}(G)
		\]
		\emph{The case $P = \codify{x}_i \coloneqq E $}
		\begin{align*}
			& \sem{\codify{$X_i := E$}}(r \cdot F + G) \\
			\eeq & \sum_{\sigma \in \N^k} \extract{\sigma}{r \cdot F + G} X_1^{\sigma_1}\cdots X_i^{\textnormal{eval}_\sigma(E)} \cdots X_k^{\sigma_k} \\
			\eeq &\sum_{\sigma \in \N^k}\left( r\cdot \extract{\sigma}{F} + \extract{\sigma}{G}\right) \cdot X_1^{\sigma_1}\cdots X_i^{\textnormal{eval}_\sigma(E)} \cdots X_k^{\sigma_k}
			\tag{$+$ and $\cdot$ defined coefficient--wise} \\
			\eeq & r \cdot \sum_{\sigma \in \N^k}\left(\extract{\sigma}{F}  \cdot X_1^{\sigma_1}\cdots X_i^{\textnormal{eval}_\sigma(E)} \cdots X_k^{\sigma_k} \right)
			+
			\left(\extract{\sigma}{G}  \cdot X_1^{\sigma_1}\cdots X_i^{\textnormal{eval}_\sigma(E)} \cdots X_k^{\sigma_k} \right)
			\tag{$+$ and $\cdot$ defined coefficient--wise} \\
	         \eeq &r \cdot \sum_{\sigma \in \N^k}\left(\extract{\sigma}{F}  \cdot X_1^{\sigma_1}\cdots X_i^{\textnormal{eval}_\sigma(E)} \cdots X_k^{\sigma_k} \right) \\
	         & \qquad +
	         \sum_{\sigma \in \N^k}\left(\extract{\sigma}{G}  \cdot X_1^{\sigma_1}\cdots X_i^{\textnormal{eval}_\sigma(E)} \cdots X_k^{\sigma_k} \right)
	         \tag{$+$ and $\cdot$ defined coefficient--wise} \\
	         \eeq & r \cdot \semapp{P}{F} + \semapp{P}{G}~.
		\end{align*}
		Next, we consider the induction step. \\ \\
		\noindent
		\emph{The case $P = P_1;P_2$.} We have
		\begin{align*}
			&\semapp{P_1; P_2}{r \cdot F + G} \\
			\eeq & \semapp{P_2}{\semapp{P_1}{r \cdot F + G}} \\
			\eeq & \semapp{P_2}{r \cdot \semapp{P_1}{F} + \semapp{P_1}{G}}
			\tag{I.H.\ on $P_1$} \\
			\eeq & r \cdot \semapp{P_2}{\semapp{P_1}{F}} + \semapp{P_2}{\semapp{P_1}{G}}~.
			\tag{I.H.\ on $P_2$}	
		\end{align*}
		\emph{The case $P = \ITE{B}{P_1}{P_2}$.} We have
			\begin{align*}
			& \sem{\ITE{B}{P_1}{P_2}}(r \cdot F + G)\\
			\eeq& \constrain{\sem{\codify{$P_1$}}(r \cdot F + G)}{B} + \constrain{\sem{\codify{$P_2$}}(r \cdot F + G)}{\neg B} \\
			\eeq&  \constrain{r \cdot \sem{\codify{$P_1$}}(F) + \sem{\codify{$P_1$}}(G))}{B} + \constrain{r \cdot \sem{\codify{$P_2$}}(F) + \sem{\codify{$P_2$}}(G)}{\neg B} 
			\tag{I.H.\ on $P_1$ and $P_2$}\\
			\eeq & r \cdot \big(\constrain{\sem{P_1}(F)}{B} + \constrain{\sem{P_2}(F)}{\neg B}\big) + \constrain{\sem{P_1}(G)}{B} + \constrain{\sem{P_2}(G)}{\neg B} 
			\tag{linearity of $\constrain{\cdot}{B}$ and $\constrain{\cdot}{\neg B}$}\\
			\eeq& r \cdot \sem{\ITE{B}{P_1}{P_2}}(F) + \sem{\ITE{B}{P_1}{P_2}}(G)\\
			\end{align*}
			\emph{The case $P = \PCHOICE{P_1}{p}{P_2}$.}
			\begin{align*}
			&\sem{\PCHOICE{P_1}{p}{P_2}}(r \cdot F + G) \\
			\eeq& p \cdot \sem{P_1}(r\cdot F + G) + (1-p)\cdot \sem{P_2}(r\cdot F + G)\\
			\eeq& p \cdot \left(r \cdot \sem{P_1}(F) + \sem{P_1}(G) \right) + (1-p)\cdot \left( r\cdot \sem{P_2}(F) + \sem{P_2}(G)\right)
			\tag{I.H.\ on $P_1$ and $P_2$}\\
			\eeq &  r \cdot \left( p \cdot \sem{P_1}(F) + (1-p) \cdot \sem{P_2}(F)\right) + p \cdot \sem{P_1}(G) + (1-p) \cdot \sem{P_2}(G) 
			\tag{reorder terms}\\
			\eeq& r \cdot \sem{\PCHOICE{P_1}{p}{P_2}}(F) + \sem{\PCHOICE{P_1}{p}{P_2}}\\
			\end{align*}
		\emph{The case $P= \WHILEDO{B}{P_1}$.}
			\begin{align*}
			&\sem{\WHILEDO{B}{P_1}}(r \cdot F + G) \\
			\eeq& \sup_{n \in \N} \left\lbrace \Phi_{B, P_1}^{n}(\boldsymbol{0})\right\rbrace(r\cdot F + G)\\
			\eeq& \sup_{n \in \N} \left\lbrace\Phi_{B,P_1}^n(\boldsymbol{0})(r \cdot F + G)\right\rbrace
			\tag{$\sup$ on $\ps \to \ps$ defined point--wise}
			\\
			\eeq& \sup_{n \in \N} \left\lbrace r\cdot \Phi_{B,P_1}^n(\boldsymbol{0})(F) \pplus \Phi_{B,P_1}^n(\boldsymbol{0})(G)\right\rbrace\tag{by straightforward induction on $n$ using I.H.\ on $P_1$}\\
			\eeq& r\cdot \sup_{n \in \N} \left\lbrace \Phi_{B,P_1}^n(\boldsymbol{0})(F)\right\rbrace + \sup_{n \in \N} \left\lbrace\Phi_{B,P_1}^n(\boldsymbol{0})(G)\right\rbrace
			\tag{apply monotone sequence theorem coefficient--wise}\\
			\eeq& r \cdot \sem{\WHILEDO{B}{P_1}}(F) \pplus \sem{\WHILEDO{B}{P_1}}(G)
		\end{align*}
	\textbf{Linearity of $\Phi_{B,P}(f)$ for linear $f$.}
		\begin{align*}
			\Phi_{B,P}(f)\left(\sum_{\sigma \in \N^k} \mu_\sigma X^\sigma\right) &= \constrain{\sum_{\sigma \in \N^k} \mu_\sigma X^\sigma}{\neg B} + f\left(\sem{P}\left(\constrain{\sum_{\sigma \in \N^k} \mu_\sigma X^\sigma}{B}\right)\right)\\
			&= \constrain{\sum_{\sigma \in \N^k} \mu_\sigma X^\sigma}{\neg B} + f \left(\sum_{\sigma \in \N^k} \mu_\sigma \sem{P}\left(\constrain{X^\sigma}{B}\right)\right)\tag{1. \& 2.}\\
			&= \constrain{\sum_{\sigma \in \N^k} \mu_\sigma X^\sigma}{\neg B} + \sum_{\sigma \in \N^k} \mu_\sigma \cdot f \left( \sem{P}\left(\constrain{X^\sigma}{B}\right)\right)\tag{$f$ lin.}\\
			&= \sum_{\sigma \in \N^k} \mu_\sigma \constrain{X^\sigma}{\neg B} + \mu_\sigma \cdot f \left( \sem{P}\left(\constrain{X^\sigma}{B}\right)\right)\\
			&= \sum_{\sigma \in \N^k} \mu_\sigma \cdot \big( \constrain{X^\sigma}{\neg B} + f \left( \sem{P}\left(\constrain{X^\sigma}{B}\right)\right)\big)\\
			&= \sum_{\sigma \in \N^k} \mu_\sigma \cdot \Phi_{B,P}(f)(X^\sigma) \qedhere
		\end{align*}
\end{proof}
\unrolling*
\begin{proof}
	Let $W, W'$ be as described in \Cref{lem:loop_unrolling}.
\begin{align*}
\sem{W}(G) &= \left(\lfp \Phi_{B,P}\right) (G)\\
				  &= \Phi_{B,P}\left( \lfp \Phi_{B,P}\right) (G)\\
				  &= \constrain{G}{\neg B} + \left(\lfp \Phi_{B,P}\right)\left(\sem{P}\big(\constrain{G}{B}\big)\right)\\
				  &= \sem{\ITE{B}{P;W}{\pskip}}(G)\\
				  &= \sem{W'}(G)\qedhere
\end{align*}
\end{proof}

\subsection{Proofs of \Cref{ssec:kozen_relation}}
\label{app:kozen}
\begin{lemma}
	The mapping $\tau$ is a bijection. The inverse $\tau^{-1}$ of $\tau$ is given by
	\[ \tau^{-1}\colon \mathcal{M} \to \ps, \quad \mu \mmapsto \sum_{\sigma \in \N^k} \mu\left(\{\sigma\}\right)\cdot \bvec{X}^\sigma\]
\end{lemma}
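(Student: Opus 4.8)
The plan is to prove both claims at once by showing that the displayed formula defines a genuine map $\tau^{-1}\colon \mathcal M \to \ps$ which is a two-sided inverse of $\tau$; a map possessing a two-sided inverse is automatically bijective, and that inverse is unique.

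First I would settle well-definedness in both directions. For $F \in \ps$, the set function $\tau(F) = \lambda S.\, \sum_{\sigma \in S}\extract{\sigma}{F}$ is a measure on $(\N^k, \mathcal P(\N^k))$: it assigns $0$ to $\emptyset$ (empty sum), and it is countably additive because any family of terms in $\PosRealsInf$ may be summed in any order and regrouped arbitrarily (Tonelli), so $\tau$ really lands in $\mathcal M$. Conversely, for $\mu \in \mathcal M$ the coefficients $\sigma \mapsto \mu(\{\sigma\})$ take values in $[0,\infty] = \PosRealsInf$, hence $\sum_{\sigma\in\N^k}\mu(\{\sigma\})\cdot\bvec{X}^\sigma$ is a legitimate element of $\ps$, so the formula does define a map $\tau^{-1}\colon \mathcal M \to \ps$.

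Next I would verify the two compositions. The direction $\tau^{-1}\circ\tau = \mathrm{id}_{\ps}$ is immediate: $\tau(F)(\{\sigma\}) = \extract{\sigma}{F}$, so $\tau^{-1}(\tau(F)) = \sum_{\sigma}\extract{\sigma}{F}\cdot\bvec{X}^\sigma = F$. For $\tau\circ\tau^{-1} = \mathrm{id}_{\mathcal M}$, fix $\mu \in \mathcal M$ and put $G = \tau^{-1}(\mu)$, so $\extract{\sigma}{G} = \mu(\{\sigma\})$. Here the one load-bearing observation enters: since $\N^k$ is countable, every $S \subseteq \N^k$ is the countable disjoint union $S = \biguplus_{\sigma\in S}\{\sigma\}$ of measurable singletons, so countable additivity of $\mu$ yields $\tau(G)(S) = \sum_{\sigma\in S}\extract{\sigma}{G} = \sum_{\sigma\in S}\mu(\{\sigma\}) = \mu(S)$ for every $S$, i.e.\ $\tau(\tau^{-1}(\mu)) = \mu$.

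The only step that is more than bookkeeping is making explicit where discreteness is used: countable additivity of $\mu$ may be applied to an arbitrary subset $S$ precisely because $\mathcal P(\N^k)$ is the chosen $\sigma$-algebra (every singleton is measurable) and $\N^k$ is countable (so $S$ is a genuinely countable union). I expect this to be the main --- and really only --- point worth spelling out; everything else is straightforward manipulation of non-negative extended-real series.
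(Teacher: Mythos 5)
Your proposal is correct and follows essentially the same route as the paper: both directions are established by computing $\tau^{-1}\circ\tau$ and $\tau\circ\tau^{-1}$ directly, with the identity $\sum_{\sigma\in S}\mu(\{\sigma\})=\mu(S)$ (countable additivity over the singleton decomposition of $S$) doing the real work in the second composition. You are somewhat more explicit than the paper about well-definedness and about where discreteness of the measurable space is used, which is a reasonable amount of extra care but not a different argument.
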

\begin{proof}
	We show this by showing $\tau^{-1} \circ \tau = \textup{id}$ and $\tau \circ \tau^{-1} = \textup{id}$.
	\begin{align*}
	\tau^{-1} \circ \tau\left(\sum_{\sigma  \in \N^k} \alpha_\sigma \bvec{X}^\sigma\right) &= \tau^{-1}\left(\lambda N\boldsymbol{.}~ \sum_{\sigma \in N} \alpha_\sigma\right)\\
	&= \sum_{\sigma  \in \N^k} \sum_{s \in \{\sigma\}} \alpha_\sigma \cdot \bvec{X}^\sigma = \sum_{\sigma  \in \N^k} \alpha_\sigma\bvec{X}^\sigma\\
	\tau \circ \tau^{-1} \left(\mu\right) &= \tau\left(\sum_{\sigma  \in \N^k} \mu(\{\sigma\}) \cdot \bvec{X}^\sigma\right) = \lambda N\boldsymbol{.}~ \sum_{\sigma \in N}\mu(\{\sigma\}) = \mu(N) = \mu
	\end{align*}
\end{proof}

\begin{lemma}
	The mappings $\tau$ and $\tau^{-1}$ are monotone linear maps.
\end{lemma}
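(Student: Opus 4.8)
The plan is to verify all four required properties --- linearity and monotonicity of $\tau$, and linearity and monotonicity of $\tau^{-1}$ --- directly from the explicit defining formulas, thereby reducing each claim to an elementary identity or inequality about (possibly infinite) sums of non-negative extended reals. The only non-bookkeeping ingredient is that such sums behave well under splitting and scaling, which is exactly the content of the Monotone Sequence Theorem recalled at the start of this appendix: writing $\sum_{\sigma \in S} a_\sigma$ as the supremum of its finite partial sums lets one pull scalars out and distribute over $+$, all within $\PosRealsInf$.

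First I would handle $\tau$. For linearity, fix $F, G \in \ps$, a scalar $r \in \PosRealsInf$, and a set $S \in \mathcal{P}(\N^k)$. Unfolding the definition of $\tau$ and the coefficient-wise definitions of $+$ and $r \cdot {}$ on $\ps$ gives $\tau(r \cdot F + G)(S) = \sum_{\sigma \in S}\bigl(r \cdot [\sigma]_F + [\sigma]_G\bigr)$, and since every summand lies in $\PosRealsInf$ this sum splits (via the Monotone Sequence Theorem) into $r \cdot \sum_{\sigma \in S}[\sigma]_F + \sum_{\sigma \in S}[\sigma]_G = \bigl(r \cdot \tau(F) + \tau(G)\bigr)(S)$, using that addition and scaling of measures are defined set-wise. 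As $S$ was arbitrary, $\tau(r \cdot F + G) = r \cdot \tau(F) + \tau(G)$. For monotonicity, if $F \preceq G$ then the coefficients of $F$ and $G$ are ordered coefficient-wise, hence $\tau(F)(S)$ and $\tau(G)(S)$ are ordered the same way for every $S$ by monotonicity of summation over $\PosRealsInf$; thus $\tau(F) \preceq \tau(G)$ in $\mathcal{M}$.

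The map $\tau^{-1}$ is then even easier. Using the formula $\tau^{-1}(\mu) = \sum_{\sigma \in \N^k}\mu(\{\sigma\}) \cdot \bvec{X}^\sigma$ from the preceding lemma and the fact that the operations on $\mathcal{M}$ act set-wise while those on $\ps$ act coefficient-wise, the identity $\tau^{-1}(r \cdot \mu + \nu) = r \cdot \tau^{-1}(\mu) + \tau^{-1}(\nu)$ and the implication $\mu \preceq \nu \Rightarrow \tau^{-1}(\mu) \preceq \tau^{-1}(\nu)$ follow term by term, with no limiting argument at all. Alternatively, one could obtain both properties of $\tau^{-1}$ for free: $\tau$ is a bijection (preceding lemma), it is linear and monotone by the above, and it is in fact an order embedding --- it reflects $\preceq$ because $\tau(F)(\{\sigma\}) = [\sigma]_F$ recovers each coefficient --- so its inverse is automatically linear and monotone. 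I do not anticipate a real obstacle; the only point needing a little care is the arithmetic of the extended reals (scaling by $r = \infty$, and the convention $0 \cdot \infty = 0$), but the supremum-of-partial-sums viewpoint already in place keeps this routine.
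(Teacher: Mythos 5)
Your proposal is correct and follows essentially the same route as the paper: both arguments unfold the explicit formulas for $\tau$ and $\tau^{-1}$ and verify linearity and monotonicity coefficient-wise respectively set-wise, with one map's properties transferred to the other via bijectivity. If anything, your version is slightly more careful, since you treat the scalar $r \in \PosRealsInf$ (including the $0 \cdot \infty$ convention) and justify splitting the infinite sums via the Monotone Sequence Theorem, whereas the paper's linearity computation only displays the additive case.
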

\begin{proof}
	First, we show that $\tau^{-1}$ is linear (and hence $\tau$, due to bijectivity):
	\begin{align*}
	\tau^{-1}(\mu + \nu) &= \sum_{\sigma  \in \N^k} (\mu + \nu) (\{\sigma\})\cdot \bvec{X}^\sigma\\
	&= 	\sum_{\sigma  \in \N^k} (\mu(\{\sigma\}) + \nu(\{\sigma\})) \cdot \bvec{X}^\sigma\tag{as $\mathcal{M}$ forms a vector space with standard +}\\
	&= \sum_{\sigma  \in \N^k} \left(\mu(\{\sigma\})\cdot \bvec{X}^\sigma + \nu(\{\sigma\}) \cdot \bvec{X}^\sigma\right)\\
	&=  \left(\sum_{\sigma  \in \N^k} \mu(\{\sigma\})\cdot \bvec{X}^\sigma\right) +  \left(\sum_{\sigma  \in \N^k} \nu(\{\sigma\})\cdot \bvec{X}^\sigma\right) = \tau^{-1}(\mu) + \tau^{-1}(\nu) 
	\end{align*}
	
	Second, we show that $\tau$ is monotone:
	\begin{align*}
	\text{Assume} ~ G_\mu \sqsubseteq G_{\mu'} &.\\
	\tau(G_\mu) &= \tau\left(\sum_{\sigma  \in \N^k} \mu(\{\sigma\})\cdot\bvec{X}^\sigma\right) = \lambda S\boldsymbol{.}\; \sum_{\sigma  \in S} \mu(\{\sigma\})\\
	&\leq \lambda S\boldsymbol{.}\; \sum_{\sigma  \in S} \mu'(\{\sigma\}) \tag{as $\mu(\{\sigma\}) \leq \mu'(\{\sigma\})$ per definition of $\sqsubseteq$}\\
	&= \tau\left(\sum_{\sigma  \in \N^k} \mu'(\{\sigma\})\cdot\bvec{X}^\sigma\right) = \tau(G_{\mu'}) 
	\end{align*}
	
	Third, we show that $\tau^{-1}$ is monotone:
	\begin{align*}
	\text{Assume} ~ \mu \sqsubseteq \mu' &.\\
	\tau^{-1}(\mu) &= \sum_{\sigma  \in \N^k} \mu(\{\sigma\})\cdot\bvec{X}^\sigma\\
	&\sqsubseteq \sum_{\sigma  \in \N^k} \mu'(\{\sigma\})\cdot\bvec{X}^\sigma \tag{as $\mu(\{\sigma\}) \leq \mu'(\{\sigma\})$ per definition of $\sqsubseteq$}\\
	&= \tau^{-1}(\mu') \tag*{\qedhere}
	\end{align*}
\end{proof}

\begin{lemma}
	Let $f\colon (P, \leq) \to (Q, \leq)$ be a monotone isomorphism for any partially ordered sets $P$ and $Q$. Then, \[f^\ast\colon \textup{Hom}(P,P) \to \textup{Hom}(Q,Q),\quad \phi \mmapsto f \circ \varphi \circ f^{-1}\] is also a monotone isomorphism.
\end{lemma}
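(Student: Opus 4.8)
The plan is to verify the three properties that make $f^\ast$ a monotone isomorphism: well-definedness, bijectivity, and monotonicity of both $f^\ast$ and its inverse, where $\textup{Hom}(P,P)$ and $\textup{Hom}(Q,Q)$ carry the pointwise lifting of $\leq$. Throughout I would abbreviate $g \coloneqq f^{-1}$. Since here "monotone isomorphism" has to be read as "order isomorphism", i.e.\ a monotone bijection whose inverse is also monotone (equivalently $p \leq p' \iff f(p) \leq f(p')$), the map $g$ is again a monotone isomorphism $(Q,\leq) \to (P,\leq)$, and this symmetry between $f$ and $g$ is what drives the whole argument.

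First I would settle well-definedness: for $\varphi \in \textup{Hom}(P,P)$ the map $f^\ast(\varphi) = f \circ \varphi \circ f^{-1}$ is a composition $Q \to P \to P \to Q$ of order-preserving maps, hence order-preserving, so $f^\ast(\varphi) \in \textup{Hom}(Q,Q)$. Next, for bijectivity I would exhibit the inverse explicitly as $g^\ast \colon \textup{Hom}(Q,Q) \to \textup{Hom}(P,P)$, $\psi \mapsto g \circ \psi \circ g^{-1} = f^{-1} \circ \psi \circ f$, and compute, using associativity of composition together with $f \circ f^{-1} = \textup{id}_Q$ and $f^{-1} \circ f = \textup{id}_P$, that $g^\ast(f^\ast(\varphi)) = f^{-1} \circ f \circ \varphi \circ f^{-1} \circ f = \varphi$ for all $\varphi$, and symmetrically $f^\ast(g^\ast(\psi)) = \psi$ for all $\psi$. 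Hence $f^\ast$ is a bijection with $(f^\ast)^{-1} = g^\ast$.

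Finally I would verify monotonicity. Unfolding the pointwise order, $\varphi \sqsubseteq \varphi'$ means $\varphi(p) \leq \varphi'(p)$ for all $p \in P$. Fixing $q \in Q$ and setting $p \coloneqq f^{-1}(q)$, the hypothesis gives $\varphi(f^{-1}(q)) \leq \varphi'(f^{-1}(q))$, and applying the monotone map $f$ yields $f^\ast(\varphi)(q) \leq f^\ast(\varphi')(q)$; as $q$ was arbitrary, $f^\ast(\varphi) \sqsubseteq f^\ast(\varphi')$. Running the identical argument with $g$ in place of $f$ shows that $(f^\ast)^{-1} = g^\ast$ is monotone as well, which together with bijectivity establishes that $f^\ast$ is a monotone isomorphism.

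I do not expect any genuine obstacle: every step is a routine unfolding of definitions. The only thing demanding a little care is the reading of "monotone isomorphism" as "order isomorphism", so that $f$ and $g = f^{-1}$ play genuinely symmetric roles, and keeping straight which composite lives in which $\textup{Hom}$-set. In the paper this lemma is invoked with $f = \tau$ in order to transport the order-theoretic (and hence fixed-point) structure on FPS transformers $\ps \to \ps$ faithfully onto the corresponding transformers on measures $\mathcal{M}$.
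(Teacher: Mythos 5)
Your proposal is correct and follows essentially the same route as the paper: exhibit $(f^{-1})^\ast$ as the inverse of $f^\ast$ and verify monotonicity by unfolding the pointwise order. If anything, your write-up is cleaner — the paper's calculation confusingly reuses $f,g$ for elements of $\textup{Hom}(P,P)$ and writes $\tau$ for the isomorphism, and it leaves monotonicity of the inverse implicit, which you correctly handle by the symmetry between $f$ and $f^{-1}$.
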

\begin{proof}
	Let $f$ be such a monotone isomorphism, and $f^\ast$ the corresponding lifting.
	First, we note that $f^\ast$ is also bijective.
	Its inverse is given by $(f^\ast)^{-1} = (f^{-1})^\ast$.
	Second, $f^\ast$ is monotone, as shown in the following calculation.
	
	\begin{align*}
	f \leq g &\iimplies \forall x. \quad f(x) \lleq g(x)\\
	&\iimplies \forall x.\quad \tau \circ f \left(\tau^{-1}\circ \tau (x)\right) \lleq \tau \circ g\left(\tau^{-1}\circ \tau (x)\right)\\
	&\iimplies \forall x. \quad \tau^\ast \circ f \left(\tau(x)\right) \lleq \tau^\ast \circ g\left(\tau (x)\right)\\
	&\iimplies \forall y. \quad \tau^\ast \circ f(y) \lleq \tau^\ast \circ g(y)\\
	&\iimplies \tau^\ast(f) \lleq \tau^\ast (g)
	\end{align*}
\end{proof}

\begin{lemma}\label{lem:helper}
	Let $P,Q$ be complete lattices, and $\tau$ a monotone isomorphism. Also let \textsf{\textup{lfp}} be the least fixed point operator. Then the following diagram commutes.
	\begin{center}
		\begin{tikzcd}
		\textup{Hom}(P,P) \ar[rr, "\tau^\ast"] \ar[dd, swap, "\lfp"] && \textup{Hom}(Q,Q) \ar[dd, "\lfp"]\\
		\\
		P \ar[rr, "\tau", swap] && Q
		\end{tikzcd}
	\end{center}
\end{lemma}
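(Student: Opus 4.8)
The plan is to unfold the definition $\tau^\ast(\varphi) = \tau \circ \varphi \circ \tau^{-1}$ and argue directly that $\tau(\lfp \varphi)$ is the \emph{least} fixed point of $\tau^\ast(\varphi)$. Since $P$ and $Q$ are complete lattices and, by the preceding lemma, $\tau^\ast$ carries $\textup{Hom}(P,P)$ into $\textup{Hom}(Q,Q)$, the Knaster--Tarski theorem guarantees that both occurrences of $\lfp$ in the diagram are well defined; hence it suffices to identify the two least fixed points.

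First I would check that $\tau(\lfp \varphi)$ is a fixed point of $\tau^\ast(\varphi)$: using $\tau^{-1} \circ \tau = \textup{id}$ and the fixed-point property of $\lfp \varphi$,
\[
  \tau^\ast(\varphi)\bigl(\tau(\lfp \varphi)\bigr)
  \eeq \tau\Bigl(\varphi\bigl(\tau^{-1}(\tau(\lfp\varphi))\bigr)\Bigr)
  \eeq \tau\bigl(\varphi(\lfp\varphi)\bigr)
  \eeq \tau(\lfp\varphi)~.
\]
Next, for minimality, I would take an arbitrary fixed point $y \in Q$ of $\tau^\ast(\varphi)$, i.e.\ $\tau(\varphi(\tau^{-1}(y))) = y$, and apply the bijection $\tau^{-1}$ to both sides to obtain $\varphi(\tau^{-1}(y)) = \tau^{-1}(y)$. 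Thus $\tau^{-1}(y)$ is a fixed point of $\varphi$, so $\lfp \varphi \ppreceq \tau^{-1}(y)$ by leastness of $\lfp \varphi$, and monotonicity of $\tau$ gives $\tau(\lfp\varphi) \ppreceq \tau(\tau^{-1}(y)) = y$. Hence $\tau(\lfp\varphi)$ lies below every fixed point of $\tau^\ast(\varphi)$ and, being itself such a fixed point, equals $\lfp\bigl(\tau^\ast(\varphi)\bigr)$ — which is precisely the claimed commutativity $\lfp \circ \tau^\ast = \tau \circ \lfp$.

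I do not expect a genuine obstacle here: this is the standard transfer of least fixed points along an order isomorphism. The only points needing a little care are (i) citing the previous lemma so that $\tau^\ast(\varphi) \in \textup{Hom}(Q,Q)$ and $\lfp\bigl(\tau^\ast(\varphi)\bigr)$ is even defined, and (ii) genuinely using that $\tau$ is an order \emph{isomorphism} — both $\tau$ and $\tau^{-1}$ monotone and mutually inverse — rather than merely monotone, since the minimality direction must pull fixed points of $\tau^\ast(\varphi)$ back along $\tau^{-1}$ and then push the resulting inequality forward along $\tau$.
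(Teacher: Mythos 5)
Your proof is correct and rests on the same underlying fact as the paper's: $\tau$ carries the set of fixed points of $\varphi$ bijectively and order-isomorphically onto the set of fixed points of $\tau^\ast(\varphi)$. The paper phrases this as a chain of equalities rewriting $\tau\bigl(\inf\{p \mid \varphi(p)=p\}\bigr)$ into $\inf\{q \mid \tau^\ast(\varphi)(q)=q\}$ (implicitly using that an order isomorphism preserves infima), whereas you verify the two defining properties of the least fixed point directly — a marginally more self-contained rendering of the same argument.
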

\begin{proof}
	Let $\varphi \in \textup{Hom}(P,P)$ be arbitrary.
	\begin{align*}
	\lfp \varphi &\eeq \inf \left\{p \bigmid \varphi(p) \eeq p\right\}\\
	\tau\left(\lfp \varphi\right) &\eeq \tau\left(\inf \left\{p \bigmid \varphi(p) \eeq p\right\}\right)\\
	&\eeq \inf \left\{\tau(p) \bigmid \varphi (p) \eeq p\right\}\\
	&\eeq \inf \left\{\tau(p)\bigmid\varphi (\tau^{-1}\circ\tau (p)) \eeq \tau^{-1}\circ\tau(p)\right\}\\
	&\eeq \inf \left\{\tau(p)\bigmid \tau \circ \varphi (\tau^{-1}\circ\tau (p)) \eeq \tau(p)\right\}\\
	&\eeq \inf \left\{q\bigmid \tau \circ \varphi(\tau^{-1} (q)) \eeq q\right\}\\
	&\eeq \inf \left\{q\bigmid \tau^\ast (\varphi) (q) \eeq q\right\}\\
	&\eeq  \lfp \tau^\ast(\varphi)\tag*{\qedhere}
	\end{align*}
\end{proof}

\begin{definition}
	Let $\mathfrak{T}$ be the program translation from \pgcl to a modified Kozen syntax, defined inductively:
	\begin{align*}
	\mathfrak{T}(\pskip) &\eeq \pskip \\
	\mathfrak{T}(\ASSIGN{x_i}{E}) &\eeq \ASSIGN{x_i}{f_E(x_1,\ldots,x_k)}\\
	\mathfrak{T}(\PCHOICE{P}{p}{Q}) &\eeq \PCHOICE{\mathfrak{T}(P)}{p}{\mathfrak{T}(Q)}\\
	\mathfrak{T}(\compose{P}{Q}) &\eeq \mathfrak{T}(P);\mathfrak{T}(Q)\\
	\mathfrak{T}(\ITE{B}{P}{Q}) &\eeq \textup{if } B \textup{ then } \mathfrak{T}(P) \textup{ else } \mathfrak{T}(Q) \textup{ fi}\\
	\mathfrak{T}(\WHILEDO{B}{P}) &\eeq \textup{while } B \textup{ do } \mathfrak{T}(P) \textup{ od}~,
	\end{align*}
	where $p$ is a probability, $k = \abs{\var(P)}$, $B$ is a Boolean expression and $P, Q$ are \pgcl programs. The extended construct \pskip as well as $\PCHOICE{P}{p}{Q}$ is only syntactic sugar and can be simulated by the original Kozen semantics. The intended semantics of these constructs are \begin{align*}
	[\pskip] &= \textup{id}\\
	\text{and} \qquad \left[\PCHOICE{P}{p}{Q}\right] &= p \cdot \mathfrak{T}(P) + (1-p) \cdot \mathfrak{T}(Q).
	\end{align*}
\end{definition}

\begin{lemma}
	For all guards $B$, the following identity holds: $\text{e}_{B} \circ \tau \eeq \tau \circ \constrain{\cdot}{B}$.
\end{lemma}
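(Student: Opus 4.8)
The goal is the commutativity identity $\mathrm{e}_{B} \circ \tau = \tau \circ \restrict{\cdot}{B}$, where $\mathrm{e}_B$ is Kozen's restriction-to-$B$ operator on measures: applied to a measure $\mu$, it should yield the measure $S \mapsto \mu(S \cap B)$. Since $\tau$ sends an FPS $F$ to the measure $\tau(F) = \lambda S.\ \sum_{\sigma \in S}[\sigma]_F$, the plan is a direct calculation: chase an arbitrary FPS $F$ and an arbitrary measurable set $S \subseteq \N^k$ through both sides and check that the resulting real numbers agree. Because $(\N^k, \mathcal{P}(\N^k))$ is discrete, every set is measurable, so no subtleties about $\sigma$-algebras arise.

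First I would unfold the left-hand side: $\bigl(\mathrm{e}_B \circ \tau\bigr)(F)$ applied to $S$ equals $\tau(F)(S \cap B) = \sum_{\sigma \in S \cap B}[\sigma]_F$. Then I would unfold the right-hand side: $\restrict{F}{B} = \sum_{\sigma \in B}[\sigma]_F \cdot \bvec{X}^\sigma$ by definition, so $\tau(\restrict{F}{B})(S) = \sum_{\sigma \in S}[\sigma]_{\restrict{F}{B}} = \sum_{\sigma \in S \cap B}[\sigma]_F$, using that $[\sigma]_{\restrict{F}{B}} = [\sigma]_F$ for $\sigma \in B$ and $0$ otherwise. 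The two expressions coincide; since $F$ and $S$ were arbitrary, the identity of functions $\ps \to \mathcal{M}$ follows. All sums are over non-negative extended reals, so reindexing/splitting the summation is unconditionally valid.

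The only thing that needs care — and I expect this to be the main (minor) obstacle — is pinning down the precise definition of Kozen's $\mathrm{e}_B$ in the modified syntax being used here, so that the statement is what one expects; once $\mathrm{e}_B(\mu) = \mu({\cdot} \cap B)$ is fixed, the argument is a two-line computation. This lemma is exactly the bridge needed so that, in the proof of \Cref{thm:rel_to_kozen}, the $\pif$-then-else and $\while$ cases of the translation $\mathfrak{T}$ line up: the guard-restriction in our FPS semantics corresponds under $\tau$ to Kozen's guard operator. With this lemma in hand, together with linearity and monotonicity of $\tau$ (established above) and the $\lfp$-commutation of \Cref{lem:helper}, the structural induction for \Cref{thm:rel_to_kozen} goes through case by case.
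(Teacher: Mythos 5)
Your proposal is correct and follows essentially the same route as the paper: unfold both sides on an arbitrary FPS and an arbitrary set $S \subseteq \N^k$, observe that each yields $\sum_{\sigma \in S \cap B} \extract{\sigma}{F}$ (equivalently $\lambda S\boldsymbol{.}~\mu(S\cap B)$ for the corresponding measure $\mu$), and conclude by extensionality. Your remarks that discreteness of the measurable space and non-negativity of the coefficients remove all measurability and convergence concerns are accurate and match the implicit assumptions in the paper's calculation.
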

\begin{proof}
	For all $G_\mu = \sum_{\sigma  \in \N^k}\mu\left(\{\sigma\}\right) \cdot \bvec{X}^\sigma\in \ps$:
	\begin{align*}
	\text{e}_B \circ \tau (G_\mu) &= \text{e}_B (\mu)\\
	&= \lambda S\boldsymbol{.}~ \mu(S \cap B)\\
	\tau \circ \constrain{G_\mu}{B} &= \tau\left(\sum_{\sigma \in B} \mu\left(\{\sigma\}\right)\cdot \bvec{X}^\sigma + \sum_{\sigma \not\in B} 0\cdot \bvec{X}^\sigma\right)\\
	&= \lambda S\boldsymbol{.}~ \mu(S \cap B)
	\end{align*}
	$\implies\qquad \forall G_\mu \in \ps.~~ \text{e}_B \circ \tau (G_\mu) \eeq \tau \circ \constrain{G_\mu}{B}$
\end{proof}

\kozenrel*
\begin{proof}
	The proof is done via induction on the program structure. We omit the loop-free cases, as they are straightforward.
	
	By definition, $\mathfrak{T}(\WHILEDO{B}{P}) \eeq \codify{while}~ B~\codify{do}~P~\codify{od}$. Hence, the corresponding Kozen semantics is equal to $\lfp T_{B,P}$, where \[T\colon (\mathcal{M} \to \mathcal{M}) \to (\mathcal{M} \to \mathcal{M}), \quad S \mmapsto \text{e}_{\bar B} + (S \circ P \circ \text{e}_{B})~.\]
	First, we show that $\tau^{-\ast} \circ T_{B,P} \circ \tau^{\ast} \eeq \Phi_{B,P}$, where $\tau^\ast$ is the canonical lifting of $\tau$, i.e., $\tau^\ast (\mathfrak{S}) \eeq \tau \circ \mathfrak{S} \circ \tau^{-1}$ for all $\mathfrak{S} \in (\ps \to \ps)$.
		\begin{align*}
		\left[\tau^{-\ast} \circ T_{B,P} \circ \tau^{\ast} \right](\mathfrak{S})&= \tau^{-\ast} \circ T_{B,P} \circ \tau \circ \mathfrak{S} \circ \tau^{-1}\\
		&= \tau^{-\ast}\left(\text{e}_{\bar B} \pplus \tau \circ \mathfrak{S} \circ \tau^{-1} \circ P\circ \text{e}_{B}\right)\\
		&= \tau^{-1} \circ \text{e}_{\bar B} \circ \tau \pplus \tau^{-1} \circ \tau \circ \mathfrak{S} \circ \tau^{-1} \circ P \circ \text{e}_{B} \circ \tau\\
		&= \tau^{-1} \circ \text{e}_{\bar B} \circ \tau \pplus \mathfrak{S} \circ \tau^{-1} \circ P \circ \text{e}_{B} \circ \tau\\
		&= \tau^{-1} \circ \tau \circ \constrain{\cdot }{\bar B}  \pplus \mathfrak{S} \circ \tau^{-1} \circ P \circ \tau \circ \constrain{\cdot}{B}\\
		&= \constrain{\cdot}{\bar B} \pplus \mathfrak{S} \circ \tau^{-1} \circ \tau \circ \sem{P} \circ \constrain{\cdot}{B} \tag{Using I.H. on $P \circ \tau$}\\
		&= \constrain{\cdot}{\bar B} \pplus \mathfrak{S} \circ \sem{P} \circ \constrain{\cdot}{B}\\
		&= \Phi_{B,P}(\mathfrak{S})
		\end{align*}
	Having this equality at hand, we can easily proof the correspondence of our \while semantics to the one defined by Kozen in the following manner:
	\begin{align*}
		&\quad\tau \circ \sem{\WHILEDO{B}{P}} \eeq \mathfrak{T}(\WHILEDO{B}{P}) \circ \tau\\
		\Leftrightarrow&\quad \tau \circ \lfp \Phi_{B,P} \eeq \lfp T_{B,P} \circ \tau\\
		\Leftrightarrow&\quad \lfp \Phi_{B,P} \eeq \tau^{-1} \circ \lfp T_{B,P} \circ \tau\\
		\Leftrightarrow&\quad \lfp \Phi_{B,P} \eeq \tau^{-\ast} (\lfp T_{B,P})\tag{Definition of $\tau^\ast$}\\
		\Leftrightarrow&\quad \lfp \Phi_{B,P} \eeq \lfp \left(\tau^{-\ast} \circ T_{B,P} \circ \tau^{\ast}\right) \tag{cf. \cref{lem:helper}}\\
		\Leftrightarrow&\quad \lfp \Phi_{B,P} \eeq \lfp \Phi_{B,P}\tag*{\qedhere}
	\end{align*}

\end{proof}
\section{Proofs of \Cref{sec:overapprox}}
\label{app:over_approx}

\overapprox*%
\begin{proof}
	Instance of Park's Lemma~\cite{?}.
\end{proof}

\superinvs*
\begin{proof}
	Let $G \in \ps$ be arbitrary.
	\begin{align*}
	\Phi_{B,P}(\hat{f})(G) ~&=~ \sum_{\sigma  \in \N^k} \extract{\sigma}{G} \Phi_{B,P}(\hat{f})(\bvec{X}^\sigma) \tag{By \Cref{lem:lin_funcs}}\\
	~&\sqsubseteq~ \sum_{\sigma  \in \N^k} \extract{\sigma}{G} f(\bvec{X}^\sigma) \tag{By assumption}\\
	~&=~ \hat{f}(G)\\
	\implies\quad\sem{W}~&\sqsubseteq~ \hat{f} \tag{By \cref{thm:overapprox}}
	\end{align*}
	
\end{proof}

\begin{proof}[Proof of \Cref{ex:rw}]
		\begin{align*}
			\Phi_{B,P}\left(\hat{f}\right)\left(X^iC^j\right) ~&=~ \left(\constrain{X^iC^j}{i=0} \pplus \hat{f}\left(\frac{1}{2}\constrain{X^iC^j}{i > 0} \cdot \frac{C}{X} \pplus \frac{1}{2}\constrain{X^iC^j}{i > 0} \cdot XC\right)\right)\\
			\textnormal{case } i=0\colon&\Rightarrow ~(C^j+\hat{f}(0)) \eeq C^j \eeq f(X^0C^j)\\
			\textnormal{case } i>0\colon&\Rightarrow ~ \frac{C}{2}\left(\hat{f}(X^{i-1}C^j) \pplus \hat{f}\left(X^{i+1}C^j\right)\right)\\
			&=~ C^j\cdot\begin{cases}
			\frac{1}{1-C^2},&i~ \textnormal{even}\\
			\frac{C}{1-C^2},&i~ \textnormal{odd}\\
			\end{cases} \eeq f(X^iC^j)\\
			\implies&~ \Phi_{B,P}\left(\hat{f}\right)(X^iC^j) \ssqsubseteq f(X^iC^j).
		\end{align*}
		Thus $\hat{f}$ is a superinvariant.
		\begin{align*}
		\Phi_{B,P}(\hat{h})(X^iC^j) ~&=~ \constrain{X^iC^j}{i=0} \pplus \hat{h}\left(\frac{1}{2}\constrain{X^iC^j}{i > 0} \cdot \frac{C}{X} \pplus \frac{1}{2}\constrain{X^iC^j}{i > 0} \cdot XC\right)\\
		\textnormal{case } i=0\colon&\Rightarrow ~(C^j+\hat{h}(0)) \eeq  1 \eeq h(X^0C^j)\\
		\textnormal{case } i>0\colon&\Rightarrow ~ \frac{C}{2}\left(\hat{h}\left(X^{i-1}C^j\right)\pplus \hat{h}\left(X^{i+1}C^j\right)\right)\\
		&=~ \frac{C^{j+1}}{2} \cdot \left(\left(\frac{1-\sqrt{1-C^2}}{C}\right)^{i-1} \pplus \left(\frac{1-\sqrt{1-C^2}}{C}\right)^{i+1}\right)\\
		&=~ C^j\cdot \left(\frac{1-\sqrt{1-C^2}}{C}\right)^i \eeq h(X^iC^j)\\
		\implies&~ \Phi_{B,P}(\hat{h})(X^iC^j) \eeq h(X^iC^j)
		\end{align*}
		Thus $\hat{f}$ is a superinvariant.
\end{proof}

\noindent\emph{Verification Python Script}
\definecolor{commentsColor}{rgb}{0.497495, 0.497587, 0.497464}
\definecolor{keywordsColor}{rgb}{0.000000, 0.000000, 0.635294}
\definecolor{stringColor}{rgb}{0.558215, 0.000000, 0.135316}
\lstinputlisting[
	caption=Python program checking the invariants,
	captionpos=b,
	frame=l,
	framerule=1.5pt,
	framesep=2pt,
	label=prog:sympy,
	language=Python,
	breaklines=true,
	backgroundcolor = \color{lightgray!50!white},
	basicstyle=\scriptsize,
	commentstyle=\color{commentsColor}\textit,
	keywordstyle=\color{keywordsColor}\bfseries,
	stringstyle=\color{stringColor},
	showstringspaces=false,
	tabsize=2
	]
	{scripts/invariant_check.py}

\begin{proof}[Proof of \Cref{ex:non_ast}]
		\begin{align*}
			\Phi_{B,P}(\hat{f})(X^i) ~&=~ \constrain{X^i}{i = 0} \pplus \frac{1}{i} \cdot \hat{f}\left(\constrain{X^i}{i > 0} \cdot \frac{1}{X}\right) \pplus \left(1 - \frac{1}{i}\right) \cdot \hat{f}\left(\constrain{X^i}{i > 0} \cdot X\right)\\
			\textnormal{case } i=0\colon~&\Rightarrow~ 1 \pplus \infty \cdot \hat{f}(0) \pplus -\infty \cdot \hat{f}(0)\\
			~&=~\; 1 \pplus \infty \cdot 0 \pplus -\infty \cdot 0 \eeq 1 \eeq f\left(X^i\right)\\
			\textnormal{case } i>0\colon~&\Rightarrow~ 0 \pplus \frac{1}{i} \cdot \hat{f}\left(X^{i-1}\right) \pplus \left(1-\frac{1}{i}\right) \cdot \hat{f}\left(X^{i+1}\right)\\
			~&=~ \frac{1}{i}\cdot \left(1-\frac{1}{e}\cdot \sum_{n=0}^{i-3}\frac{1}{n!}\right) \pplus \left(1- \frac{1}{i}\right) \cdot \left(1-\frac{1}{e}\cdot \sum_{n=0}^{i-1}\frac{1}{n!}\right)\\
			~&=~\frac{1}{i} \mminus \left(\frac{1}{ei}\cdot \sum_{n=0}^{i-3}\frac{1}{n!}\right) \pplus \left(1-\frac{1}{e}\cdot \sum_{n=0}^{i-1}\frac{1}{n!}\right) \mminus  \frac{1}{i} \pplus \left(\frac{1}{ei}\cdot \sum_{n=0}^{i-1}\frac{1}{n!}\right)\\
			~&=~  \left(1-\frac{1}{e}\cdot \sum_{n=0}^{i-1}\frac{1}{n!}\right) \pplus \frac{1}{ei}\cdot \left(\frac{1}{(i-2)!} + \frac{1}{(i-1)!}\right)\\
			~&=~ \left(1-\frac{1}{e}\cdot \sum_{n=0}^{i-1}\frac{1}{n!}\right) \pplus \frac{1}{e(i-1)!} \eeq \left(1-\frac{1}{e}\cdot \sum_{n=0}^{i-2}\frac{1}{n!}\right)\\
			~&=~ f\left(X^i\right)\\
			\implies~&~ \Phi_{B,P}(\hat{f})(X^i) \eeq f(X^i)
		\end{align*}
		
		\noindent\emph{Mathematica input query:} 
		\begin{align*}
		\texttt{Input:}&\qquad\frac{1}{k} \cdot  \left(1-\frac{1}{e} \cdot \sum_{n = 0}^{k-3}\frac{1}{n!}\right) +  (1-\frac{1}{k}) \cdot \left(1-\frac{1}{e} \cdot  \sum_{n = 0}^{k-1}\frac{1}{n!}\right) - \left(1-\frac{1}{e} \cdot \sum_{n=0}^{k-2}\frac{1}{n!}\right)\\
		\texttt{Output:}&\qquad 0\qedhere
		\end{align*}
\end{proof}

\end{document}